\newtheorem{theorem}{Theorem}
\newtheorem{definition}{Definition}
\newtheorem{lemma}{Lemma}
\newtheorem{corollary}{Corollary}
\newtheorem{remark}{Remark}
\let \VEC \mathbf
\let \vec \mathbf
\def\ind{{\mathds{1}}}
\long\def\comment#1{}
\let \lessthan \prec
\let \morethan \succ
\newcounter{l1}
\newcounter{l2}
\newcounter{l3}
\newcommand{\bdotlist}{\begin{list}{$\bullet$}{}}
\newcommand{\bboxlist}{\begin{list}{$\Box$}{}}
\newcommand{\bbboxlist}{\begin{list}{\raisebox{.005in}{{\tiny $\blacksquare$ \ \ }}}{}}
\newcommand{\bdashlist}{\begin{list}{$-$}{} }
\newcommand{\blist}{\begin{list}{}{} }
\newcommand{\barablist}{\begin{list}{\arabic{l1}}{\usecounter{l1}}}
\newcommand{\balphlist}{\begin{list}{(\alph{l2})}{\usecounter{l2}}}
\newcommand{\bAlphlist}{\begin{list}{\Alph{l2}.}{\usecounter{l2}}}
\newcommand{\bdiamlist}{\begin{list}{$\diamond$}{}}
\newcommand{\bromalist}{\begin{list}{(\roman{l3})}{\usecounter{l3}}}
\pgfplotsset{compat=newest}
\pgfplotsset{plot coordinates/math parser=false}
\newcommand{\la}{\langle}
\renewcommand{\ll}{\lambda}
\renewcommand{\hm}{h_{min}}
\newcommand{\expec}{\mathbb{E}}
\begin{document}

\title{
Duration-differentiated Energy Services with a Continuum of Loads}

\author{Ashutosh Nayyar,  Matias Negrete-Pincetic, Kameshwar Poolla and Pravin Varaiya
\thanks{The first author is with Ming Hsieh Department of Electrical Engineering at University of Southern California. The second author is with the Department of Electrical Engineering at Pontificia Universidad Catolica de Chile. The others are with the Department of Electrical Engineering and Computer Sciences, University of California, Berkeley. This work was supported in part by CERTS under sub-award 09-206;  NSF under Grants 1135872, EECS-1129061, CPS-1239178, CCF-1331692 and CNS-1239274.
}}

\maketitle
\thispagestyle{empty}
\pagestyle{plain}


\begin{abstract}
As the proportion of  total power supplied by  renewable sources increases, it gets more costly to use reserve generation to compensate for the  variability of  renewables like solar and wind. Hence  attention has been drawn to exploiting  flexibility in demand as a substitute for reserve generation.  Flexibility has different attributes.  In this paper we consider   loads  requiring a constant power  for a specified duration (within  say one day), whose flexibility resides in the fact that  power may be delivered  at any time  so long as the total duration of service equals the load's specified duration.  We  give conditions under which  a variable power supply is adequate to meet these flexible loads, and describe how to allocate the power to the loads.  We also characterize the  additional power needed when the supply is inadequate.   We study the problem of allocating the available power to loads to maximize welfare, and show that the welfare optimum can be sustained as a competitive equilibrium in a forward market in which electricity  is sold as service contracts differentiated by the duration of service and power level.  We compare this forward market with a spot market in their ability to capture the flexiblity inherent in duration-differentiated loads.
\end{abstract}

\section{Introduction} \label{sec-introduction}

Integration of renewable resources into the power grid poses  challenges. On the technological side, the uncertain and intermittent nature of renewable sources like solar and wind makes the control problem of balancing supply and demand more difficult.  On the economic side, it becomes necessary to offer incentives for consumers to shape their demand to match the  variable renewable supply.

Conventionally renewable power is treated as negative load, so variability in renewable power reappears as variability in net load, which is balanced using  reserve generation. But this approach becomes quite costly as renewable generation accounts for a significant portion of total power supply, see e.g.~\cite{CAISO, kirschen2010,negwankowshamey12}.  Consequently there is growing interest in schemes that exploit the flexibility in electricity consumption  to compensate for the variability in supply.  Several such schemes have been studied, see e.g. ~\cite{cal09K, galus2010, papaoren2010, matdyscal12K, anand2012, Paschalidis}.  There is unlikely to be a single best scheme since loads vary in their attributes of flexibility:  a  load  requiring a fixed quantity of energy over a fixed period with no restriction as to when and at what rate this energy is delivered is very flexible; typically, however,  loads  have constraints on the rates of energy consumption, power limits, interruptions, etc that must be taken into account in the characterization of flexible loads. 

Consumers may need  compensation in exchange for permitting suppliers to exploit  their flexibile loads, and this may require  economic incentives that go beyond those inherent in a  real time price for energy,   e.g. ~\cite{tanvar93, pravin2011,negmey12, bitarlow2012}.  New commodities or service contracts may need to be designed to elicit consumer response tailored to the attributes of renewables and flexible loads. Examples of such new electricity services include service reliability and deadline-differentiated contracts  \cite{tanvar93, bitarlow2012}.
 
In this paper, we consider  loads that require a fixed power level for a specified duration.  Their flexibility resides in the fact that the time when service is delivered is immaterial so long as the total service duration (within say one day) equals the load-specified duration.  So loads are differentiated by the duration of service  and the power level.  We call them \textit{duration-differentiated} or DD loads The supplier is free to schedule DD loads within their constraints
so as to match the variability in supply. 

  We  characterize when a given power supply profile  is adequate to meet a set of DD loads, and describe an algorithm that schedules the loads when the supply is adequate. We  determine the additional power needed when the supply is inadequate.  We find the  allocation of power that maximizes the aggregate utility of a continuum of consumers, with no restriction on the shape of their utility functions.  Furthermore we show that this welfare maximizing allocation is an equilibrium in a competitive market for DD services whose  energy prices  vary by  service duration.  We compare this market with a day ahead and a real time or spot market in terms of their ability to capture the flexibility of DD loads.

We consider  a continuum of loads as in  \cite{tanvar93}, that is, each load is tiny compared with the aggregate load.  (Think of individual loads in the kW range and the aggregate load in the MW range.)  The  continuum of loads allows us to avoid bin-packing problems in scheduling loads and to avoid imposing  concavity conditions on the utility functions.

The rest of the paper is organized as follows. In \S \ref{sec-adequacy} we investigate supply adequacy, load scheduling and supplemental power. In \S \ref{sec-market},  we analyze a forward market for duration differentiated services. 
We consider two illustrative examples of this forward market for a collection of consumers with the same  utility function in \S \ref{sec-illustra}.  We compare this market with a spot market in \S \ref{sec-compare}.  Concluding remarks and future lines of research are provided in \S \ref{sec-conclusions}.

\subsection*{Notation}

Boldface letters denote vectors.  Time is slotted and $t=1, \cdots, T$ denotes a time slot, with $T$ slots equal to (say) one day.   For an assertion $A$, $\mathds{1}_{A}$ is the indicator function that equals $1$ if $A$ is true and $0$ if $A$ is false. 






\section{Adequacy Conditions}\label{sec-adequacy}

We  find the conditions under which  the available power profile is adequate to satisfy a set of loads.
The  conditions are expressed  in terms of \textit{majorization} relationships between supply and demand. 
We use the following terminology.


\textbf{Consumers} There is a continuum of consumers indexed by $x \in [0,1]$. 
 Consumer $x$ demands power `slice' $(\ell(x), h(x))$ representing $\ell(x) \ge 0$ kW of power for a duration of $h(x) \in \{1,2,\ldots,T\}$ slots. Interpret $\ell(x)$ as per capita demand, so  consumers in $[x,x+dx]$ demand  $\ell(x)dx$ kW of power for $h(x)$ slots.  These consumers are indifferent as to when electricity is delivered so long as  the duration of service is $h(x)$ slots. $\ell(\cdot)$ and $h(\cdot)$ are  measurable functions. 
The terms `consumer' and `load' are used interchangeably.

\textbf{Supply} 
The  power available in slot $t$ is $p_t$.  Reorder the time axis so that $p_1 \ge p_2 \ge \cdots \ge p_T$.  Call the vector $\VEC p=(p_1,p_2,\ldots,p_T)$ the \textit{supply profile}. It is also the generation duration curve.  When we need to refer to the power supply in the original time order we will call it
the \textit{supply time profile} and use the notation $\VEC q = (q_1, \cdots , q_T)$.  $\VEC p$ is a permutation of $\VEC q$.

\textbf{Demand} The (aggregate) \emph{demand profile} is the vector $ \VEC d=(d_1,d_2,\ldots,d_T)$ with 
\[d_t=\int_0^1 \ell(x) \ind(h(x) \geq t ) dx.\]  
So $d_t$ is the energy need of  loads requiring at least $t$ slots of service.  Hence $d_1 \ge d_2 \ge \cdots \ge d_T$.  The demand profile is also the demand duration curve.

\subsection{Adequacy}
\begin{definition}
The supply profile $\VEC p=(p_1,p_2,\ldots,p_T)$ is  \emph{exactly adequate} for loads $(\ell(x),h(x))$, $x \in [0,1]$, if there is an allocation  $A(x,t) \in \{0,1\}$, $x \in [0,1]$, $t \in \{1,2,...,T\}$  so that
\begin{align}
\int_0^1 \ell(x)&A(x,t)dx = p_t ,\label{ca1}\\
\sum_{t=1}^T &A(x,t) = h(x) .\label{ca2}
\end{align}
\end{definition}
\noindent 
Thus
$A(x,t) = 1$ if and only if $x$ receives electricity in slot $t$;
 (\ref{ca1}) states that  the available power is used up;    (\ref{ca1}) and  (\ref{ca2}) together imply that the requirement of every load is met.

The next result is proved similarly to  the case of rate-constrained services with finitely many consumers studied in \cite{dd2013}.
\begin{theorem}\label{thm:cont_adequacy}
The supply profile  $\VEC p=(p_1,p_2,\ldots,p_T)$ is exactly adequate for the demand profile  $ \VEC d=(d_1,d_2,\ldots,d_T)$ associated with demands $(\ell(x),h(x))$  if and only if $\VEC p$ is majorized by $\VEC d$ ($\VEC p \morethan \VEC d$) i.e., 
\begin{align}
\sum_{t=1}^s p_t &\leq \sum_{t=1}^s d_t,\;\;\;s=1,...,T-1 ,\label{a1}\\
\sum_{t=1}^T p_t &= \sum_{t=1}^T d_t . \label{a2}
\end{align}
\end{theorem}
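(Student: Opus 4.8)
My plan is to prove both implications, dispatching necessity by a short computation and devoting the real effort to a constructive proof of sufficiency via induction on the number of slots.

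\textbf{Necessity.} Suppose a valid allocation $A(x,t)$ exists. Summing (\ref{ca1}) over $t$ and interchanging the order of integration,
\[
\sum_{t=1}^T p_t = \int_0^1 \ell(x)\Big(\sum_{t=1}^T A(x,t)\Big)dx = \int_0^1 \ell(x)h(x)\,dx,
\]
by (\ref{ca2}). The same interchange in the definition of $d_t$ gives $\sum_{t=1}^T d_t=\int_0^1\ell(x)h(x)\,dx$, which yields (\ref{a2}). For (\ref{a1}), fix $s\le T-1$. Since each $A(x,t)\in\{0,1\}$ and $\sum_{t=1}^T A(x,t)=h(x)$, we have $\sum_{t=1}^s A(x,t)\le\min(s,h(x))$, while $\sum_{t=1}^s\ind(h(x)\ge t)=\min(s,h(x))$. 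Integrating against $\ell(x)$ gives $\sum_{t=1}^s p_t\le\sum_{t=1}^s d_t$.

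\textbf{Sufficiency.} I will peel off the slot with the \emph{smallest} supply, $p_T$. The base case $T=1$ forces $p_1=d_1=\int_0^1\ell(x)\,dx$, and serving every load in the single slot works. For the inductive step, write $D_k=\int_{\{h(x)=k\}}\ell(x)\,dx$ for the power demanded by loads of duration exactly $k$, so that $d_t=\sum_{k\ge t}D_k$. The construction fills slot $T$ greedily with the \emph{longest-duration} demand: serve all of $D_T$, then as much of $D_{T-1}$ as the budget $p_T$ permits, and so on toward shorter durations, splitting one class if needed. Because the power measure $S\mapsto\int_S\ell(x)\,dx$ is non-atomic, a measurable sub-population of any prescribed total power exists, so the budget $p_T$ is met with equality and no integrality obstruction arises; this is precisely where the continuum of loads is used. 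Serving a load in slot $T$ decrements its remaining duration by one, producing a $(T-1)$-slot problem whose demand profile is $\VEC d'=(d_1-\delta_1,\dots,d_{T-1}-\delta_{T-1})$, where $\delta_t$ is the power drawn from duration class $t$.

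\textbf{The two facts to verify} are that the greedy step is feasible and that the reduced problem again satisfies the hypotheses. Feasibility needs $p_T\ge D_T=d_T$, which follows by subtracting the $s=T-1$ instance of (\ref{a1}) from the balance (\ref{a2}). That $\VEC d'$ is a legitimate (nonincreasing, nonnegative) demand curve is automatic, since every decremented class size stays nonnegative. The crux is re-establishing majorization, namely $\sum_{t=1}^s p_t\le\sum_{t=1}^s(d_t-\delta_t)$ for $s\le T-2$. Under the longest-first rule one computes $\sum_{t=1}^s\delta_t=\max(p_T-d_{s+1},0)$, so the required inequality reduces to $\sum_{t=1}^s p_t+p_T\le\sum_{t=1}^{s+1}d_t$; since $p_T\le p_{s+1}$, this follows from the $s+1$ instance of (\ref{a1}). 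I expect this to be the main obstacle: the inequality closes only because demand is served in order of decreasing duration and because $p_T$ is the smallest supply value, so the choice of which slot to peel and which loads to assign to it is essential. Once majorization is inherited, the inductive hypothesis supplies an allocation on slots $1,\dots,T-1$; appending $A(\cdot,T)=1$ for the loads served in slot $T$ restores $\sum_{t=1}^T A(x,t)=h(x)$ and matches $p_T$, completing the proof.
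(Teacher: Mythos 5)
Your proof is correct, but the sufficiency argument takes a genuinely different route from the paper's. The paper does not induct on $T$: it observes that $\VEC d$ is exactly adequate for itself (via the natural allocation $A(x,t)=\ind(h(x)\ge t)$), invokes a cited lemma from majorization theory that any $\VEC p \morethan \VEC d$ is reachable from $\VEC d$ by a finite sequence of Robin Hood transfers, and then shows each such transfer preserves exact adequacy by using the non-atomicity of the load continuum to defer a set $\Lambda$ of loads of total power exactly $\epsilon$ from the donor slot to the recipient slot. Your proof instead peels off the smallest supply slot, fills it greedily with the longest-duration classes (splitting one class via non-atomicity), and verifies that the reduced $(T-1)$-slot problem inherits the majorization hypotheses; the key computation $\sum_{t=1}^s \delta_t = \max(p_T - d_{s+1},0)$ and the closing step via $p_T \le p_{s+1}$ both check out (note that when $p_T \le d_{s+1}$ the required inequality is just \eqref{a1} at $s$ rather than your reduced form, but that case is immediate). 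What your approach buys is self-containedness — no appeal to the transfer lemma — and an explicit constructive allocation that is essentially an offline version of the paper's own Longest-Leftover-Duration-First rule (Theorem \ref{thm:continuum_allocation}); what the paper's approach buys is brevity, with the combinatorial work outsourced to a standard result. Your necessity argument is the same computation as the paper's, merely phrased with head sums $\sum_{t=1}^s A(x,t) \le \min(s,h(x))$ instead of the paper's complementary tail sums. Both proofs rely essentially on the continuum of loads at the same point: exactly splitting a population of loads to meet a power budget.
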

\begin{proof}
We first prove sufficiency.
Since $d \morethan d$ and $d$ is exactly adequate for $d$, so to prove sufficiency, it is enough to show  that if $\VEC a$ is exactly adequate, any  profile with $\VEC b \morethan \VEC a$ is also exactly adequate.  We first recall a  lemma from majorization theory.
\begin{definition}
Define a \textit{Robin Hood (RH)} transfer on $\VEC a$ to be an operation that
\begin{enumerate}[(i)]
\item Selects indices $t,~s$ such that $a_t > a_s$,
\item Replaces $a_t$ by $a_t - \epsilon$ and $a_s$ by $a_s+\epsilon$, for any $\epsilon \in (0,a_t-a_s)$,
\item Rearranges the new vector in a non-increasing order.
\end{enumerate}
\end{definition}
\begin{lemma}  \cite{arnold}
Let $\VEC b \morethan \VEC a$ with $\VEC a \neq \VEC b$. Then, there exists a sequence of RH transfers that can be applied on $\VEC a$ to get $\VEC b$. 
\end{lemma}

It remains to prove that a RH transfer preserves exact adequacy. Let $\VEC a$  be exactly adequate  with allocation function $A(x,t)$. By applying a RH transfer of amount $\epsilon$ from slot $t$ to $s$ a new profile $\VEC{\tilde{a}}$ is obtained. 

Since $\int_0^1 \ell(x) [A(x,t)-A(x,s)] dx = a_t - a_s > \epsilon$, there exists $\Lambda \subset [0,1]$ such that 
\[ A(x,t) =1, \ A(x,s) = 0, x \in \Lambda , \mbox{ and } \int_{\Lambda} A(x,t) l(x) dx = \epsilon.\]
Then the profile $\VEC{\tilde{a}}$ is also exactly adequate for the allocation $\tilde{A}$ in which the loads $x \in \Lambda$ are deferred from $t$ to $s$:
 \begin{equation} 
\tilde{A}(x,r) = \left\{
\begin{array}{ll}
1 & r=s, ~x\in \Lambda\\
0 & r=t,~ x \in \Lambda \\
A(x,r) & \mathrm{otherwise} 
\end{array}
\right. .
\end{equation}
To prove the necessity, note that if the allocation  $A(x,t)$ satisfies \eqref{ca2}, then it must be true that for $s \geq 1$
\begin{align}
\sum_{t=s}^T A(x,t) &\geq (h(x)-s+1)_+ 
=\sum_{t=s}^T \ind_{\{h(x) \geq t\}} .\label{eq:nec_proof1}
\end{align}
Multiply  \eqref{eq:nec_proof1} by $\ell(x)$ on both sides, integrate and sum
\begin{align}
 & \sum_{t=s}^T \int_0^1 \ell(x)A(x,t)dx \geq \sum_{t=s}^T\int_0^1 \ell(x)\ind(h(x) \geq t)dx  \notag \\
 &\implies \sum_{t=s}^T p_t \geq \sum_{t=s}^T d_t, \label{eq:cont_necc2}
\end{align}
where we used \eqref{ca1} and the definition of $d_t$ in \eqref{eq:cont_necc2}. For $s=1$, the inequalities in \eqref{eq:nec_proof1}, \eqref{eq:cont_necc2} become equalities. 
\end{proof}

 The inequalities in our definition of majorization are reversed from standard usage to allows us to read the adequacy condition
 $\VEC d \lessthan \VEC p$  as demand being `less than' supply.
It is worth noting that \eqref{eq:cont_necc2} gives the adequacy condition in terms of  the `tail' energy in the demand and supply profiles:
\begin{corollary}\label{cor:cont_adequacy}
The condition for adequacy $\VEC p \morethan \VEC d$ can be written as
\begin{align}
\sum_{t=s}^T d_t &\leq \sum_{t=s}^T p_t,\;\;\;s=2,...,T\\
\sum_{t=1}^T p_t &= \sum_{t=1}^T d_t
\end{align}
\end{corollary}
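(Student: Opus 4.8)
The plan is to derive the corollary directly from Theorem~\ref{thm:cont_adequacy} by a purely algebraic manipulation: the corollary merely re-expresses the majorization condition $\VEC p \morethan \VEC d$ in terms of tail sums rather than head sums, and the bridge between the two is the total-energy equality \eqref{a2}. So I would take as given that adequacy is equivalent to \eqref{a1}--\eqref{a2}, and show that, under the common total \eqref{a2}, the system of head-sum inequalities \eqref{a1} is equivalent to the stated system of tail-sum inequalities. Chaining the two equivalences then yields that adequacy is equivalent to the corollary's conditions.

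First I would let $S := \sum_{t=1}^T p_t = \sum_{t=1}^T d_t$ denote the common total guaranteed by \eqref{a2}. For each $s$, splitting the full sum into a head and a tail gives
\[
\sum_{t=1}^{s} p_t = S - \sum_{t=s+1}^{T} p_t, \qquad
\sum_{t=1}^{s} d_t = S - \sum_{t=s+1}^{T} d_t .
\]
Substituting these into the head inequality $\sum_{t=1}^{s} p_t \le \sum_{t=1}^{s} d_t$ and cancelling $S$ turns it into $\sum_{t=s+1}^{T} d_t \le \sum_{t=s+1}^{T} p_t$. As $s$ runs over $1,\dots,T-1$ in \eqref{a1}, the index $s+1$ runs over $2,\dots,T$, which is exactly the range appearing in the corollary. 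Since each step is reversible, the two systems coincide, and the equality constraint \eqref{a2} carries over verbatim.

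This argument has no substantive obstacle; the only thing to watch is the index bookkeeping in the reindexing $s \mapsto s+1$, so that the head range $s=1,\dots,T-1$ maps onto the tail range $s=2,\dots,T$ with no off-by-one error. As a consistency check I would observe that this tail formulation is precisely the inequality \eqref{eq:cont_necc2} already produced in the necessity half of the proof of Theorem~\ref{thm:cont_adequacy}; hence the corollary can equally be read as simply recording that intermediate inequality as the native form of the adequacy test, which makes clear why the reformulation is legitimate and not merely formal.
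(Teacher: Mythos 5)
Your proposal is correct and matches the paper's intent: the paper gives no separate proof of this corollary, simply noting that the tail-sum inequality \eqref{eq:cont_necc2} obtained in the necessity part of Theorem~\ref{thm:cont_adequacy} already expresses adequacy in tail form, and your explicit head-to-tail conversion via the common total $S$ (with the reindexing $s \mapsto s+1$ handled correctly) is exactly the algebra underlying that observation.
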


\begin{corollary}\label{cor:reserve}
Suppose $\VEC p = (p_1, \cdots , p_T)$, with $p_t = E/T$ constant and total energy $E$.  Let $\VEC d = (d_1, \cdots, d_T)$ be any demand profile requiring total energy $\sum p_t = E$.  Then $p$ is adequate for $d$.
\end{corollary}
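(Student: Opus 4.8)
The plan is to verify the adequacy criterion of Theorem~\ref{thm:cont_adequacy} directly, that is, to check that the flat supply profile $\VEC p$ is majorized by $\VEC d$, i.e. $\VEC p \morethan \VEC d$. Since $p_t = E/T$ for every $t$, the partial sums of the supply are $\sum_{t=1}^s p_t = sE/T$, and the total-energy equality $\sum_{t=1}^T p_t = \sum_{t=1}^T d_t = E$ holds by hypothesis, so \eqref{a2} is immediate. It therefore remains only to establish the $T-1$ inequalities \eqref{a1}, namely $sE/T \le \sum_{t=1}^s d_t$ for $s=1,\ldots,T-1$.

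The key step is the observation that the average of the $s$ largest demand values is at least the overall average. Because a demand profile is, by construction, non-increasing ($d_1 \ge d_2 \ge \cdots \ge d_T$), every one of the first $s$ entries dominates every one of the last $T-s$ entries; hence $\tfrac{1}{s}\sum_{t=1}^s d_t \ge d_s \ge d_{s+1} \ge \tfrac{1}{T-s}\sum_{t=s+1}^T d_t$, where the outer terms are the minimum of the leading block and the maximum of the trailing block respectively. Cross-multiplying (valid since $1 \le s \le T-1$ makes both $s$ and $T-s$ positive) gives $(T-s)\sum_{t=1}^s d_t \ge s\sum_{t=s+1}^T d_t$, and adding $s\sum_{t=1}^s d_t$ to both sides yields $T\sum_{t=1}^s d_t \ge s\sum_{t=1}^T d_t = sE$. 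This is exactly $\sum_{t=1}^s d_t \ge sE/T = \sum_{t=1}^s p_t$, which is \eqref{a1}.

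With both \eqref{a1} and \eqref{a2} in hand, Theorem~\ref{thm:cont_adequacy} gives $\VEC p \morethan \VEC d$ and hence the adequacy of $\VEC p$ for $\VEC d$. I do not expect a genuine obstacle here: the statement is precisely the classical fact that the uniform vector is majorized by any vector of the same total sum, so the content is routine once the criterion of Theorem~\ref{thm:cont_adequacy} is invoked. The only point requiring care is the direction of the majorization inequalities, which are reversed relative to the standard convention throughout this paper; one must make sure the flat profile sits on the ``supply'' side of \eqref{a1}, as it does above.
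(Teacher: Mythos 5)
Your proof is correct and takes essentially the same route as the paper: both reduce the corollary to the majorization criterion of Theorem~\ref{thm:cont_adequacy} and verify \eqref{a1}--\eqref{a2} using only the fact that the demand profile is non-increasing. The sole difference is in the micro-justification of $\sum_{t=1}^s d_t \ge sE/T$: the paper argues geometrically that the ``concave'' partial-sum function $D(s)=\sum_{t=1}^s d_t$ must lie above the ``linear'' $P(s)=sE/T$ since the two agree at $s=0$ and $s=T$, whereas you prove the identical inequality by the explicit averaging argument (the mean of the $s$ largest entries dominates the mean of the remaining $T-s$ entries), which amounts to an elementary, self-contained proof of the paper's chord fact.
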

\begin{proof}
$D(s) = \sum _{t=1}^s d_s$ is a `concave' function since $d_s$ is non-increasing.  $P(s) = \sum_{t=1}^s p_s = (E/T)s$ is `linear'.  Since $D(0) = P(0) = 0$
and $D(T) = P(T) =E$, it follows that  $P(s) \le D(s)$ for all $s$, so \eqref{a1} and \eqref{a2} are satisfied and $p$ is adequate for $d$. 
\end{proof}
\begin{remark}
Corollary \ref{cor:reserve}  provides a simple illustration of how demand flexibility can substitute for reserve capacity.  As above suppose $(l(x), h(x))$ is the 
flexible demand of consumer $x \in [0,1]$, leading to the demand profile $(d_1, \cdots, d_T)$ with total energy demand $E = \sum d_t$.  By Corollary \ref{cor:reserve} this demand can be met
by the constant supply profile $p_t= E/T$.

Now suppose that consumers are required to express their demand for each slot in a day or $T$ slot-ahead market.  Then  $x$'s demand will be of the form
$\ell (x) $kW for slots $t \in H(x)$ where  $H(x) = \{t_1, \cdots , t_{h(x)}\}$ comprises the $h(x)$ distinct slots that $x$ has selected.  In this day-ahead market the aggregate demand for slot $t$ will be 
\[\delta_t = \int_0^1 \ell (x) \ind (t \in H(x)) dx .\]
To meet this demand, the supplier will need generation with average power $p_{av} =E/T$ but which  can also supply the peak demand $p_{peak} = \max_t \delta_t$.   The  ratio 
\[R = [p_{peak} - p_{av}]/p_{av}\]
 is a measure of the reserve capacity that is needed to meet this day-ahead demand.  If, however, the
supplier can exploit the demand flexibility, $R=0$ and \textit{ no }reserve capacity would be needed.
\end{remark}

\comment{

\begin{remark} Corollary  \ref{cor:reserve} 

also provides the fact that flat power supply is the more convenient \textit{shape} for supplying duration differentiated loads.  This fact could really simplify the running of a forward wholesale market without the need to explicitly consider (on that wholesale market) additional attributes (and markets) for ramping, flexibility, etc. 

In an system with fully DD capabilities, the wholesale market could be simplified by selling and buying only blocks of fixed power for certain duration. These blocks could also easily accommodate average costs (allowing to include start up costs) compared to current wholesale (day-ahead) markets in which those start up costs are usually compensated through uplift payments. 

The incentives for having an infrastructure able to support demand flexibility should come, most likely, by having new markets and participants at the retail level, e.g., aggregators. But the wholesale market would be somehow oblivious from operational issues. In the same line, the whole long term planning of a system at the transmission level could be also be simplified. By having in place a fully flexible system peaking power demand and reserves are not an issue. Hence, the long term planning could be just based on average energy requirements.

We recognize that all these remarks and advantages of a DD system require having in place a system with fully DD capabilities. It is a matter of further research to quantify if achieving a system with a really flexible demand side is more efficient and practical than trying to capture flexibility on wholesale markets.\end{remark}
}

\subsection{Simple Adequacy}
A  less strict notion of adequacy is also useful.
\begin{definition}
The supply profile $\VEC p=(p_1,p_2,\ldots,p_T)$ is  \emph{simply adequate}  for loads $(\ell(x),h(x)), x \in [0,1]$ if there is an allocation  $A(x,t) \in \{0,1\}$, $x \in [0,1]$, $t \in \{1,2,...,T\}$  such that
\begin{align}
\int_0^1 \ell(x)&A(x,t)dx \leq p_t ,\label{cas1}\\
\sum_{t=1}^T &A(x,t) = h(x) .\label{cas2}
\end{align}
\end{definition}
From \eqref{cas1}, \eqref{cas2} it follows that
\begin{equation}
\sum_{t=1}^T p_t    \geq  \int_0^1 \ell(x)h(x)dx = \sum_{t=1}^T d_t,
\end{equation}
so the total available energy supply  exceeds the total  energy required by the loads.
Simple adequacy can also be stated in terms of  the tail energy.
\begin{theorem} \label{thm2}
 $\VEC p$ is simply adequate if and only if
  \begin{align}
\sum_{t=s}^T d_t &\leq \sum_{t=s}^T p_t,\;\;\;s=1, \cdots ,T. \label{sa}
\end{align}
\end{theorem}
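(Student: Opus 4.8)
The plan is to prove the two directions separately: necessity by a direct adaptation of the necessity argument of Theorem \ref{thm:cont_adequacy}, and sufficiency by reducing simple adequacy to exact adequacy through a single dummy load that absorbs the surplus supply.

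For necessity, I would suppose $\VEC p$ is simply adequate with allocation $A(x,t)$ and fix $s \in \{1,\dots,T\}$. Since each load is served in exactly $h(x)$ slots and at most $s-1$ of these can lie in $\{1,\dots,s-1\}$, the pointwise bound
\[
\sum_{t=s}^T A(x,t) \ge (h(x)-s+1)_+ = \sum_{t=s}^T \ind(h(x)\ge t)
\]
holds exactly as in \eqref{eq:nec_proof1}. Multiplying by $\ell(x)$, integrating over $x$, using \eqref{cas1} on the left and the definition of $d_t$ on the right gives $\sum_{t=s}^T d_t \le \sum_{t=s}^T p_t$. The one difference from the exact case is that \eqref{cas1} is an inequality, so the $s=1$ relation no longer collapses to an energy equality; hence \eqref{sa} is obtained for every $s=1,\dots,T$, not merely $s\ge 2$.

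For sufficiency, I would assume \eqref{sa} and read off from its $s=1$ instance the slack $\Delta := \sum_{t=1}^T p_t - \sum_{t=1}^T d_t \ge 0$. I absorb this slack with a dummy load demanding $\Delta$ kW for a single slot, which adds $\Delta\,\ind(t=1)$ to the demand duration curve. The augmented profile $\VEC{\hat{d}} = (d_1+\Delta,\, d_2,\dots,d_T)$ is still non-increasing, and by construction $\sum_{t=s}^T \hat{d}_t = \sum_{t=s}^T d_t$ for $s \ge 2$ while $\sum_{t=1}^T \hat{d}_t = \sum_{t=1}^T p_t$. Invoking \eqref{sa} for $s\ge 2$ together with the definition of $\Delta$, the tail conditions of Corollary \ref{cor:cont_adequacy} hold for $\VEC{\hat{d}}$, so $\VEC p \morethan \VEC{\hat{d}}$ and Theorem \ref{thm:cont_adequacy} supplies an allocation $\hat A$ that is exactly adequate for the enlarged population. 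Restricting $\hat A$ to the original loads preserves \eqref{cas2}, and the power drawn by the dummy load in each slot merely lowers the real loads' consumption below $p_t$, yielding \eqref{cas1}; hence $\VEC p$ is simply adequate for $\VEC d$.

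The point needing the most care is the sufficiency reduction. I must verify that $\VEC{\hat{d}}$ is a legitimate (non-increasing) demand profile realized by appending genuine duration-differentiated loads, that appending the dummy load keeps every tail inequality for $s\ge 2$ untouched while restoring total-energy equality at $s=1$, and that discarding the dummy load's portion of the exact allocation converts the equalities \eqref{ca1} back into the inequalities \eqref{cas1}. Concentrating the entire slack in the highest-power slot as a duration-one load is precisely what leaves all tail sums for $s \ge 2$ unchanged, which is what makes \eqref{sa} coincide exactly with the hypotheses of Theorem \ref{thm:cont_adequacy}.
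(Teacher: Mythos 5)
The paper states Theorem~\ref{thm2} without proof, so there is no official argument to compare against; your proposal is correct and is the natural completion using the paper's own machinery. Your necessity half is exactly the necessity argument of Theorem~\ref{thm:cont_adequacy}: the derivation of \eqref{eq:cont_necc2} there only uses an upper bound on $\int_0^1 \ell(x)A(x,t)dx$, never the equality in \eqref{ca1}, so it transfers verbatim under \eqref{cas1}, and without exactness the $s=1$ case simply remains an inequality --- which is precisely the extra condition in \eqref{sa}. Your sufficiency reduction (absorb the surplus $\Delta$ in a duration-one dummy load, so the tail sums for $s\ge 2$ are unchanged and total energy is restored, invoke Theorem~\ref{thm:cont_adequacy} via Corollary~\ref{cor:cont_adequacy}, then discard the dummy's share of the exact allocation) is sound. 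The one step you flag but should make fully explicit is how a dummy load is ``appended'' in the continuum model, since demand profiles are integrals over the fixed index set $[0,1]$: re-parameterize, e.g.\ place the original loads on $[0,1/2]$ via $y \mapsto (\tilde\ell(y),\tilde h(y)) = (2\ell(2y), h(2y))$ and put loads with $(\tilde\ell,\tilde h) = (2\Delta, 1)$ on $(1/2,1]$. This realizes exactly $\VEC{\hat d} = (d_1+\Delta, d_2,\ldots,d_T)$, and restricting the resulting exact allocation to $[0,1/2]$ gives $\int_0^1 \ell(x)A(x,t)dx \le p_t$ with every original duration met, completing your argument.
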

\subsection{Allocation}
Suppose $\VEC p$ is simply adequate for $\VEC d$. Let $\VEC{q} = (q_1, \cdots, q_T)$ be the supply time profile.  So $\VEC q$ is a permutation of $\VEC p$.   The following theorem describes a (causal) allocation algorithm.
%
%
\begin{theorem}\label{thm:continuum_allocation}
Suppose $\VEC p$ is simply adequate for  loads $(\ell(x),h(x)), x \in [0,1]$.  Let $\VEC q$ be the supply time profile corresponding to   $\VEC p$. Construct the  allocation $A(x,t)$:
\begin{enumerate}
\item At slot 1, define $y_1(x)=h(x)$. Find the smallest non-negative integer $k$ such that
\begin{equation}\nonumber
\int_0^1 \ell(x) \ind_{\{y_1(x) \geq k\}}dx \leq q_1.
\end{equation}
Pick the largest $x' \in [0,1]$ such that
\begin{equation}\nonumber
\int_0^1 \ell(x) \ind_{\{y_1(x) \geq k\}}dx + \int_0^{x'} \ell(x) \ind_{\{y_1(x) = k-1\}}dx \leq q_1 ,
\end{equation}
let  $\mathcal{A}_1=\{x : y_1(x) \geq k \} \cup \left([0,x'] \cap \{x : y_1(x) =k-1 \}  \right)$, and set $A(x,1)=1 \iff x \in \mathcal{A}_1$.
\item At slot $t$, define $y_t(x)=h(x) - \sum_{s=1}^{t-1} A(x,s)$. Find the smallest non-negative integer $k$ such that
\begin{equation}\nonumber
\int_0^1 \ell(x) \ind_{\{y_t(x) \geq k\}} \leq q_t .
\end{equation}
Pick the largest $x' \in [0,1]$ such that
\begin{equation}\nonumber
\int_0^1 \ell(x) \ind_{\{y_t(x) \geq k\}}dx + \int_0^{x'} \ell(x) \ind_{\{y_t(x) = k-1\}}dx \leq q_t ,
\end{equation}
let  $\mathcal{A}_t=\{x : y_t(x) \geq k \} \cup \left([0,x'] \cap \{x : y_t(x) =k-1 \}  \right)$, and set $A(x,t)=1 \iff x \in \mathcal{A}_t$.
\end{enumerate}
The allocation $A$ satisfies all the demands.
\end{theorem}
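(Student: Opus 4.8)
The plan is to show that once the algorithm has run through all $T$ slots every consumer's requirement is exactly met, i.e. $\sum_{t=1}^T A(x,t)=h(x)$ for a.e.\ $x$. Write $y_t(x)$ for the residual duration at the start of slot $t$ (so $y_1=h$), and define the residual demand tail $D_j(t)=\int_0^1 \ell(x)\ind(y_t(x)\ge j)\,dx$, so that $D_j(1)=d_j$ and $D_j(t)$ is non-increasing in $j$. I would induct over the slots with the invariant $I(t)$: the remaining supply $\{q_t,\dots,q_T\}$, sorted non-increasingly into $\VEC P^{[t]}=(P_1,\dots,P_m)$ with $m=T-t+1$, is simply adequate for the residual demand, i.e. $D_j(t)=0$ for $j>m$ and $\sum_{j=s}^{m}D_j(t)\le\sum_{\sigma=s}^{m}P_\sigma$ for $s=1,\dots,m$. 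The base case $I(1)$ is exactly the hypothesis of simple adequacy (Theorem~\ref{thm2}), and at the terminal state $t=T+1$ ($m=0$) the invariant forces $D_j=0$ for all $j\ge1$, which is the desired conclusion. Thus the whole theorem reduces to propagating the invariant across one slot.

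The heart is a single-slot lemma: if $\VEC P^{[t]}$ is simply adequate for $D(t)$ and we serve one slot of power $c=q_t$ greedily from the top, then $\VEC P^{[t+1]}=\VEC P^{[t]}\setminus\{c\}$ is simply adequate for $D(t+1)$. First I would note that the tightest tail inequality at $s=m$ is $D_m(t)\le P_m=\min_\sigma P_\sigma\le c$, so the critical top level is fully served and the maximum residual level drops below $m$; this gives the "$D_j(t+1)=0$ for $j>m-1$" half of $I(t+1)$. Letting $k$ be the index selected by the algorithm (smallest $k$ with $D_k(t)\le c$, so levels $\ge k$ are cleared and level $k-1$ is partially served to exhaust $c$), bookkeeping which consumers move down one level gives
\[
D_j(t+1)=\begin{cases} D_j(t), & j\le k-2,\\ D_{k-1}(t)+D_k(t)-c, & j=k-1,\\ D_{j+1}(t), & j\ge k.\end{cases}
\]
Summing, the residual tail sums $S'(s):=\sum_{j=s}^{m-1}D_j(t+1)$ satisfy the clean relations $S'(s)=S(s)-c$ for $s\le k-1$ and $S'(s)=S(s+1)$ for $s\ge k$, where $S(s)=\sum_{j=s}^m D_j(t)$.

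To close the step I would compare against the supply tail $\Pi'(s):=\sum_{\sigma=s}^{m-1}P^{[t+1]}_\sigma$ using $\Pi'(s)=\Pi(s)-\min(c,P_s)$ (the $m-s$ smallest entries of $\VEC P^{[t+1]}$ are $\{P_s,\dots,P_m\}$ with $c$ deleted when $c\le P_s$, and are $\{P_{s+1},\dots,P_m\}$ when $c>P_s$), where $\Pi(s)=\sum_{\sigma=s}^m P_\sigma$. For $s\le k-1$ the invariant yields $S'(s)=S(s)-c\le\Pi(s)-c\le\Pi(s)-\min(c,P_s)=\Pi'(s)$. For $s\ge k$ I would instead use $S'(s)=S(s+1)\le\Pi(s+1)$ together with $\Pi(s+1)\le\Pi'(s)$; the latter holds because the $m-s$ smallest entries of the sub-multiset $\VEC P^{[t+1]}$ form a size-$(m-s)$ subset of $\VEC P^{[t]}$ and hence have sum at least that of the $m-s$ smallest entries of $\VEC P^{[t]}$. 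This establishes $I(t+1)$ and completes the induction.

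The main obstacle, and the reason this is not a one-line monotonicity claim, is that $\VEC q$ is an \emph{arbitrary} permutation of $\VEC p$, so the power $c=q_t$ consumed at slot $t$ is a generic, not necessarily largest, element of the remaining supply; the split into the two cases above, via the $\min(c,P_s)$ identity and the subset-sum monotonicity, is exactly what absorbs this. Two minor points I would flag: the algorithm must be read so that loads with $y_t(x)=0$ are never served (equivalently $k\ge1$), since otherwise a slot with $c\ge\int_0^1\ell(x)\,dx$ would over-serve already-satisfied consumers; and the "largest $x'$" step is well defined precisely because $x'\mapsto\int_0^{x'}\ell(x)\ind(y_t(x)=k-1)\,dx$ is continuous, so $c$ can be matched exactly. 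This last fact is where the continuum of loads is essential and where a finite population would reintroduce a bin-packing obstruction.
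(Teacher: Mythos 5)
Your proof is correct, and it takes a genuinely different route from the paper's. The paper argues by exchange: simple adequacy supplies \emph{some} feasible allocation $B(x,t)$; consumers are partitioned into finitely many groups $P_S$ according to the set $S$ of slots in which $B$ serves them, and finitely many measure-preserving swaps of ``atoms'' $C' \subset \mathcal{A}_1 \setminus \mathcal{B}_1$ against $D' \subset \mathcal{B}_1 \setminus \mathcal{A}_1$ convert $B$ into a feasible allocation that agrees $\mu$-almost everywhere with LLDF at slot $1$ (the swap is legitimate because every consumer in $C'$ has a longer leftover duration than every consumer in $D'$, so some later slot serves $C'$ under $B$ but not $D'$); the construction is then repeated for subsequent slots. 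You instead run a forward induction with the tail-sum inequalities of Theorem~\ref{thm2} as an invariant, verified through explicit update formulas for the residual demand tails and the identity $\Pi'(s)=\Pi(s)-\min(c,P_s)$ for the residual supply tails; I checked the update formulas, the identity (including ties), and both cases of the comparison, and they are sound. The trade-off: the paper's route needs only the definition of simple adequacy (existence of $B$) and never invokes the tail characterization, but it leaves the slot-by-slot recursion and the finite-swap bookkeeping informal (``the argument for future time slots is similar''); your route is more quantitative and self-contained, correctly isolates the real difficulty --- that $q_t$ is an arbitrary, not necessarily largest, element of the remaining supply --- and yields the sufficiency direction of Theorem~\ref{thm2} as a byproduct (LLDF constructs an allocation whenever the tail inequalities hold), at the cost of consuming the necessity direction of Theorem~\ref{thm2} as its base case. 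Your two flagged conventions --- never serving loads with $y_t(x)=0$, and exact exhaustion of $q_t$ via continuity of $x' \mapsto \int_0^{x'}\ell(x)\ind(y_t(x)=k-1)\,dx$ --- are the correct reading of the algorithm and are implicitly needed under either approach.
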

\begin{proof}
See Appendix.
\end{proof}
\begin{remark}
We call this allocation the Longest Leftover Duration First (LLDF) rule because at each slot $t$ it gives higher priority to loads with a longer leftover duration $y_t(x) = h(x) - \sum_{s=1}^{t-1} A(x,s)$.
$\VEC q$ is the supply time profile, this rule is on-line. 
 Before slot 1, the supplier must know the load requirements $(\ell (x), h(x))$, $x \in [0,1]$, but
then before slot $t$, the supplier only needs to know the current supply $q_t$ to decide which loads to schedule for slot $t$.  If the supply is adequate, all loads will be served.  If the supply is not adequate additional power must be procured, as discussed next.
\end{remark}

\subsection{Additional Power Procurement}
If the supply profile $\VEC p$ is not adequate, then the supplier may have to purchase additional supply profile $\VEC a = (a_1,\ldots,a_T)$ in order to serve the loads. 
With a linear cost for  additional power, the  power to be purchased at minimum cost while ensuring that all demands are met is given by the solution of the following optimization problem:
\begin{equation}
    \min_{\VEC a \geq 0} \sum_{t=1}^T c \cdot a_t \notag \quad
    \mbox{subject to~~~} \VEC p + \VEC a \morethan \VEC d, \label{LP}
\end{equation}
where $c \geq 0$ is the unit price of additional power.  The minimum value of this optimization problem and an algorithm for finding the minimizing vector $\VEC a$ has been provided in \cite{dd2013}.

%
%


%
%

\section{Market Implementation}\label{sec-market}

We investigate a forward market for duration-differentiated energy services in which  all market transactions are completed at time 0, before slot 1. For the moment we do not consider any uncertainty.  The market has three elements:

\begin{itemize}

\item Services: The services $(l,h)$ are differentiated by the number $h$ of time slots during which $\ell$ kW  is delivered.  Service $(\ell, h)$ is sold at price $\pi_h \times \ell$.

\item Consumers:  The benefit to consumer $x \in [0,1]$ who receives service $(\ell, h)$ is  $U(x,\ell,h)$.  $U(x, \ell, h) \ge 0$ is a bounded measurable function with $U(x,0,0) = 0$.  There is no convexity assumption on $U$.  The net benefit to consumer $x$ is $U(x, \ell, h) - \pi_h \times \ell$.

\item Supplier: The (aggregate) supplier knows the  supply profile $\VEC p = (p_1, \cdots, p_T)$ at time 0. 

\end{itemize}

%

We first formulate the  social welfare maximization problem and then show that the optimum  allocation is sustained as a competitive equilibrium.

\subsection{Social Welfare Problem}

For a given supply profile $\VEC p = (p_1,\ldots,p_T)$ say that
an allocation  $x \mapsto (\ell(x),h(x))$ is \textit{feasible} if $\VEC p$ is simply adequate for the associated demand profile. By Theorem \ref{thm2} the social welfare optimization problem is
\begin{align}
    &\max_{\ell(x),h(x)} \int_0^1 U(x,\ell(x),h(x))dx \label{16}\\
    & \mbox{subject to} \notag \\
    &\sum_{i=t}^T d_i \leq \sum_{i=t}^T p_i, \quad 
    d_i = \int_0^1 \ell(x)\ind{\{i \leq h(x)\}}dx, \quad i=1,2,\ldots,T, \notag \\
    & \ell(x) \geq 0,\;\;h(x) \in \{1,2,\ldots,T\}. \notag
\end{align}
Define $\VEC z(x) = (z_0(x),z_1(x),\ldots,z_T(x))$ with $\VEC z(0)=0$ and 
\begin{eqnarray}
\dot{z_0}(x) &=& U(x,\ell(x),h(x)) , \label{eq:de_1}\\
\dot{z_t}(x)&=&\ell(x)\sum_{i=t}^T\ind{\{i \leq h(x)\}}, \notag\\
&=& \ell(x) [h(x)+1 -t]_+ , \ t=1, \cdots, T .\label{eq:de_2}
\end{eqnarray}
Then
\begin{eqnarray}
z_0(1) &=& \int_0^1 U(x,\ell(x),h(x))dx ,\notag \\
z_t(1) &=&  \sum_{i=t}^T \int_0^1 \ell(x)\ind{\{i \leq h(x)\}}dx.
\end{eqnarray}
Thus, $z_0(1)$ is the value of the objective function and  $z_t(1) \leq \sum_{i=t}^T p_i$, $t=1,2,\ldots,T$ are the constraints of the  welfare maximization problem.
 
For $0 \le x \le 1$, define $F(x) \subset R^{1+T}$ by
\begin{equation}
\begin{aligned}
F(x) = \Big\{U(x,\ell(x),h(x)), \ell(x)h(x),  \ell(x) [h(x)-1]_+ \\
\ldots \ell(x) [h(x)+1 -T]_+  ~\mid~ \ell(x) \ge 0, h(x) \in \{1, \cdots, T\}\Big\} .\nonumber
\end{aligned}
\end{equation}
$x \mapsto F(x) \subset R^{1+T}$ is a set-valued function. 
The differential equations \eqref{eq:de_1} and \eqref{eq:de_2} can be
written as the differential inclusion,
\[\dot{\VEC z}(x) \in F(x).\]
 The integral of the set-valued function $F(x)$ is  the set
 \begin{align} 
 &G = \Big\{ \int_0^1 f(x)dx \Big| \begin{array}{l}f~\mbox{ is a measurable function with}\\
 f(x) \in F(x) \end{array} \Big\}.
 \end{align}
\begin{theorem}
$G $ is convex and closed.  
\end{theorem}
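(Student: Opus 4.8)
The plan is to recognize $G$ as the Aumann integral of the measurable set-valued map $x \mapsto F(x) \subset \mathbb{R}^{1+T}$ over the non-atomic measure space $([0,1], \text{Lebesgue})$, and to treat the two assertions separately: convexity will follow from Lyapunov's convexity theorem, while closedness is the substantive part and will need a truncation/compactness argument together with control on the welfare coordinate. Throughout I would use the fact that every point of $F(x)$ has non-negative coordinates, so that any selection $f$ with $\int_0^1 f\,dx$ finite automatically lies in $L^1([0,1];\mathbb{R}^{1+T})$.

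For convexity, fix $a,b \in G$, say $a=\int_0^1 f\,dx$ and $b=\int_0^1 g\,dx$ for measurable selections $f(x),g(x)\in F(x)$, and fix $\lambda\in(0,1)$. I would introduce the vector measure $\nu(E)=\int_E (f-g)\,dx \in \mathbb{R}^{1+T}$ on the Borel subsets of $[0,1]$; it is finite (since $f,g\in L^1$) and non-atomic (being absolutely continuous with respect to Lebesgue measure). By Lyapunov's theorem its range is convex, and since $\nu(\emptyset)=0$ and $\nu([0,1])=a-b$ both lie in the range, there is a measurable $E$ with $\nu(E)=\lambda(a-b)$. Then $h:=f\ind_E+g\ind_{E^c}$ is again a measurable selection of $F$, and $\int_0^1 h\,dx = b+\nu(E)=\lambda a+(1-\lambda)b$, so $\lambda a+(1-\lambda)b\in G$. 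This step uses only non-atomicity of the underlying measure, neither boundedness nor closedness of the values $F(x)$, so I expect it to be routine.

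For closedness, I would take $a^{(n)}=\int_0^1 f_n\,dx \to a$ with selections $f_n(x)=\big(U(x,\ell_n,h_n),\,\ell_n h_n,\,\ldots,\,\ell_n[h_n+1-T]_+\big)$ and try to exhibit a selection whose integral equals $a$. The coordinate $t=1$ gives $\int_0^1 \ell_n h_n\,dx \to a_1$, and since $h_n\ge 1$ this already yields a uniform $L^1$ bound $\int_0^1 \ell_n\,dx \le a_1+o(1)$ on the power densities. I would then freeze the problem at a level $M$ by intersecting with $\ell\le M$: the truncated multifunction $F_M$ is integrably bounded and closed-valued, so by Aumann's compactness theorem each $G_M=\int_0^1 F_M\,dx$ is compact and convex, and one attempts to pass to the limit $M\to\infty$ along the increasing family $G_M\subseteq G$.

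The main obstacle is precisely this last limit. The $L^1$ bound on $\ell_n$ does not prevent the power density from concentrating on sets of vanishing measure, and because $U$ is assumed only bounded and measurable, the welfare coordinate $\int_0^1 U(x,\ell_n,h_n)\,dx$ can then approach a value that is not attained by any genuine selection. Making the argument go through therefore requires an equi-integrability / no-escape-of-mass estimate on the $\ell_n$, or equivalently a mild upper-semicontinuity hypothesis on $\ell\mapsto U(x,\ell,h)$ that forbids the benefit from jumping up as $\ell\to 0^{+}$; under such a condition the truncated limits close up and $a\in G$. I expect the bookkeeping for convexity to be mechanical and this control of the welfare coordinate under concentration to be where the real work lies.
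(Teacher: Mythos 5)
Your convexity argument is correct and is, in substance, the paper's entire proof: the paper's justification consists of citing Lyapunov's theorem on the range of a non-atomic vector-valued measure (via Aumann's theory of set-valued integrals), and your construction with $\nu(E)=\int_E (f-g)\,dx$ is exactly how that citation unwinds. As you note, this step needs neither boundedness nor closedness of the values $F(x)$, only non-atomicity of Lebesgue measure.

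For closedness, your proposal stalls in the right place, but the honest conclusion is stronger than you state: the step does not merely ``require'' an extra hypothesis --- without one the claim is false, so no truncation argument can close it. Take $T=1$, so $F(x)=\{(U(x,\ell,1),\ell):\ell\ge 0\}$, and let $U(x,\ell,1)=\ind(\ell>0)$, which is bounded and measurable. Then $G=\{(0,0)\}\cup\{(a,e):0<a\le 1,\ e>0\}$, and the points $(1/2,1/n)\in G$ converge to $(1/2,0)\notin G$, since $\int_0^1\ell\,dx=0$ forces $\ell=0$ a.e.\ and hence a zero first coordinate; this is precisely your ``benefit jumps up as $\ell\to 0^+$'' failure mode. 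Moreover, even continuity of $U$ in $\ell$ does not rescue closedness: with $U(x,\ell,1)=\min(\ell,1)$ and $\ell_n=n\,\ind_{[0,1/n]}$, the integrals $(1/n,1)$ converge to $(0,1)\notin G$, reflecting escape of mass to infinity in $\ell$. This is exactly why Aumann's compactness theorem for set-valued integrals assumes an integrably bounded, closed-valued multifunction, and $F$ here is neither; the paper's one-line proof therefore genuinely establishes only convexity. The productive way to finish is to aim lower than topological closedness of $G$: for the downstream existence result (the paper's Theorem 5) one only needs the supremum of $z_0(1)$ over the bounded feasible set to be attained, and since $U$ is bounded, concentration of $\ell$ on small sets produces vanishing welfare, so only the upward-jump failure is harmful. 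Your proposed upper-semicontinuity hypothesis on $\ell\mapsto U(x,\ell,h)$ is thus the right (and essentially necessary) repair for what the theorem is used for, even though it still does not make $G$ literally closed.
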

\begin{proof}
The proof relies on a theorem of Lyapunov on the convexity of the range of a vector
 valued integral \cite{aumann65}.
\end{proof}
Observe that
\begin{equation*}
G = \left\{\VEC z(1) ~\Big|~ \begin{array}{l}\VEC z(1) \mbox{ is reached by a feasible allocation}~\\  x \mapsto (\ell(x), h(x)) \end{array}\right\}.
\end{equation*}  

The welfare maximization problem restated in terms of $G$ is
\begin{eqnarray}
J = \max_{\VEC z(1)} && z_0(1) \label{j1} \\
\mbox{s.t.} && z_t (1) \le \sum_{i=t}^T p_i, \ t=1, \cdots, T,\label{3}\\
&& \VEC z(1) \in G .\label{4}
\end{eqnarray}

\begin{theorem} \label{theorem5}
The welfare maximization problem has a solution for any bounded measurable utility function $U(x,l,h)$.
\end{theorem}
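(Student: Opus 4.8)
The plan is to exploit the finite-dimensional reformulation \eqref{j1}--\eqref{4}, which recasts the original infinite-dimensional problem over measurable functions $(\ell(\cdot),h(\cdot))$ as the maximization of the linear functional $\VEC z(1) \mapsto z_0(1)$ over the attainable set $G \subset \mathbb{R}^{1+T}$ intersected with the constraint half-spaces \eqref{3}. Once this is in hand, existence is a direct application of the Weierstrass extreme value theorem, provided the feasible region is nonempty and compact. So the whole proof reduces to verifying those two properties of the set
\[
\Phi = \Big\{ \VEC z(1) \in G ~:~ z_t(1) \le \sum_{i=t}^T p_i, \ t = 1, \ldots, T \Big\}.
\]

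First I would check that $\Phi$ is nonempty. Taking $\ell(x) \equiv 0$ with any admissible $h(x)$ produces the attainable point whose welfare coordinate is $\int_0^1 U(x,0,h(x))\,dx$ and whose demand coordinates $z_1(1), \ldots, z_T(1)$ all vanish; since the $p_i$ are nonnegative, this point lies in $G$ and satisfies \eqref{3}, so $\Phi \neq \emptyset$. Next I would establish compactness. Every $f(x) \in F(x)$ has nonnegative coordinates, because $U \ge 0$, $\ell(x) \ge 0$ and $[h(x)+1-t]_+ \ge 0$, and integration preserves this, so $G$ lies in the nonnegative orthant. The constraints \eqref{3} then confine each demand coordinate to $0 \le z_t(1) \le \sum_{i=t}^T p_i$, while boundedness of $U$, say $|U| \le M$, gives $0 \le z_0(1) \le M$. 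Hence $\Phi$ is contained in the compact box $[0,M] \times \prod_{t=1}^T [0,\,\sum_{i=t}^T p_i]$. Because $G$ is closed by the preceding theorem and the half-spaces in \eqref{3} are closed, $\Phi$ is a closed subset of a compact set, hence compact.

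With nonemptiness and compactness in place, the conclusion is immediate: $z_0(1)$ is a continuous (indeed linear) function of $\VEC z(1)$, so by Weierstrass it attains its maximum $J$ on $\Phi$. A maximizing vector, by the very definition of $G$ as the integral of the set-valued map $F$, is realized by some measurable feasible allocation $x \mapsto (\ell(x),h(x))$, which is the desired optimal solution of \eqref{16}.

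The step that carries the real weight is the compactness of $\Phi$, and within it the closedness of $G$. Boundedness is elementary once one notices that the $t$-indexed constraints \eqref{3} together with the nonnegativity of the coordinates of $F$ trap every component of $\VEC z(1)$ in a bounded interval, and that bounded $U$ bounds the welfare coordinate. Closedness of $G$, by contrast, is not elementary; it is exactly the content of the preceding theorem, which rests on Lyapunov's convexity theorem. Without it, a maximizing sequence of attainable vectors could converge to a limit outside the attainable set and existence could fail, so the crux is that the Aumann--Lyapunov machinery has already upgraded $G$ from merely convex to convex \emph{and} closed.
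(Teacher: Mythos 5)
Your proposal is correct and follows essentially the same route as the paper: the paper's (one-line) proof likewise reduces existence to compactness of the feasible set $\{\VEC z(1) \in G : z_t(1) \le \sum_{i=t}^T p_i\}$, with closedness of $G$ supplied by the preceding Lyapunov/Aumann theorem. You merely fill in the details the paper leaves implicit (nonemptiness via $\ell \equiv 0$, boundedness from nonnegativity of the coordinates plus boundedness of $U$, and the Weierstrass conclusion), which is a faithful elaboration rather than a different argument.
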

\begin{proof}
The set of $\VEC z(1)$ satisfying \eqref{3}, \eqref{4} is compact and convex.  So the optimum $\VEC z^*(1)$ exists.  
\end{proof}
Theorem \ref{theorem5}  is a  consequence of assuming a continuum of users. Similar results are available for example in the study of an economy with a continuum of traders \cite{aumann2}.

\begin{remark}{\rm 
In standard commodity markets, one often considers concave utility functions to reflect decreasing marginal utility of consumption. In the case of duration-differentiated loads, the concave case is illustrated by the decreasing marginal utility  of the number of hours a pool is heated or a room is cooled.  However,
if a consumer only wants a minimum number of hours of pool heating or air cooling, the resulting utility function is
$U(x, \ell, h) = \ind [\ell \ge \ell_0, h \ge h_0]$, which  is not concave.}
\end{remark}

\subsection{Competitive Equilibrium Analysis}
We study a competitive equilibrium for the market for duration-differentiated services.   Recall the definition of a competitive equilibrium.

\begin{definition}
For the services  of durations $h=1,2,\ldots,T$ and associated prices $\pi_h$ per kW,  a competitive equilibrium requires three conditions:

\begin{itemize}
\item Consumers maximize their welfare, that is,
\begin{equation}
(\ell (x), h(x)) \in \arg
\max_{\ell,h} U(x,\ell,h) - \pi_h \ell .
\end{equation}

\item Supplier maximizes  revenue.  
The supplier uses the supply profile $\VEC p$ to offer a bundle of services $\{L_h, h\}, h=1, \cdots, T$.  The bundle must be feasible, i.e., 
\begin{align}
\sum_{i=t}^T \delta_i \leq \sum_{i=t}^T p_i, \quad t=1,\ldots,T \notag,
\end{align}
where $\delta_i :=  \sum_{s=i}^T L_s $ is the demand profile associated with the bundle of services $L_1,L_2,\ldots,L_t$.

 The supplier's profit maximization problem is to choose a feasible production bundle to maximize its revenue:
\begin{equation}
\begin{aligned}
\max& \sum_{h=1}^T L_h \pi_h\\
&s.t.\;\;\; \\
&\sum_{i=t}^T \delta_i \leq \sum_{i=t}^T p_i, \quad t=1,\ldots,T \notag,
\end{aligned}
\end{equation}
where $\delta_i :=  \sum_{s=i}^T L_s $.
\item The market clears, that is,
\begin{equation}
L_h = \int_0^1 \ell(x) \ind_{\{h(x)=h\}}.
\end{equation}
    \end{itemize}
\end{definition}
A competitive equilibrium is called \textit{efficient} if the resulting allocation maximizes social welfare.
\begin{theorem}
There exists an efficient competitive equilibrium in a forward market for duration-differentiated services in which the service of duration $h$ is traded at a  price of $\pi_h$ per kW. 
\label{thm:market}
\end{theorem}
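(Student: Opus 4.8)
The plan is to build the equilibrium directly from the welfare optimum supplied by Theorem~\ref{theorem5}. Let $\VEC z^\ast = \VEC z^\ast(1)$ denote the maximizer of \eqref{j1}--\eqref{4}, attained by a feasible allocation $x \mapsto (\ell^\ast(x), h^\ast(x))$, and write $P_t := \sum_{i=t}^T p_i$ for the right-hand sides of the constraints \eqref{3}. The key observation is that the $T$ constraints $z_t(1) \le P_t$ are linear and $G$ is convex and compact, so the program is a finite-dimensional concave maximization admitting Lagrange multipliers. I would extract nonnegative multipliers $\mu_1, \dots, \mu_T$ and then define the service prices by
\[ \pi_h := \sum_{t=1}^T \mu_t\,[h+1-t]_+ = \sum_{t=1}^h \mu_t (h+1-t), \qquad h = 1, \dots, T. \]
This particular combination is forced by the identity $\sum_{t=1}^T \mu_t\,\ell\,[h+1-t]_+ = \pi_h \ell$, which converts the tail-energy bookkeeping of the constraints into the per-kW charge faced by a consumer.

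To produce the multipliers I would use the perturbed value function $v(\VEC b) = \max\{ z_0 : \VEC z \in G,\ z_t \le P_t + b_t \}$. Because $G$ is convex and bounded, $v$ is finite, concave, and nondecreasing in each coordinate, hence superdifferentiable at $\VEC b = 0$; any supergradient furnishes $\mu_t \ge 0$ with $v(\VEC b) \le v(0) + \sum_t \mu_t b_t$. Taking $b_t = z_t - P_t$ for an arbitrary $\VEC z \in G$ yields $z_0 - \sum_t \mu_t (z_t - P_t) \le J$ for all $\VEC z \in G$, while feasibility of $\VEC z^\ast$ together with $\mu_t \ge 0$ gives the reverse inequality at $\VEC z^\ast$. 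This establishes strong duality, shows that $\VEC z^\ast$ maximizes the Lagrangian $z_0 - \sum_t \mu_t z_t$ over $G$, and yields complementary slackness $\mu_t (z_t^\ast - P_t) = 0$.

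I would then verify the three equilibrium conditions. For \emph{consumer optimality}, every $\VEC z \in G$ equals $\int_0^1 g(x)\,dx$ for a selection $g(x) \in F(x)$, so the Lagrangian value equals $\int_0^1 [\,U(x,\ell(x),h(x)) - \pi_{h(x)}\ell(x)\,]\,dx$; since $\VEC z^\ast$ maximizes this integral over all feasible allocations, a variational argument (if a positive-measure set of consumers could strictly improve $U(x,\ell,h) - \pi_h \ell$, splicing in that improvement would raise the integral and contradict optimality) shows that $(\ell^\ast(x), h^\ast(x))$ maximizes $U(x,\ell,h) - \pi_h \ell$ for a.e.\ $x$. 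For \emph{supplier optimality}, I would compute that any bundle $\{L_h\}$ with tail demand $\delta_i = \sum_{s\ge i} L_s$ satisfies $\sum_{i=t}^T \delta_i = \sum_h L_h [h+1-t]_+$, whence the revenue obeys $\sum_h L_h \pi_h = \sum_t \mu_t \sum_{i=t}^T \delta_i \le \sum_t \mu_t P_t$ by $\mu_t \ge 0$ and feasibility; complementary slackness then shows the bundle $L_h^\ast = \int_0^1 \ell^\ast(x)\ind_{\{h^\ast(x)=h\}}dx$ attains this bound (since $z_t^\ast = \sum_{i=t}^T\delta_i^\ast$), so it solves the supplier's program. \emph{Market clearing} holds by the definition of $L_h^\ast$, and \emph{efficiency} is automatic because the allocation is the welfare maximizer.

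The main obstacle is the consumer-optimality step: transferring optimality of the aggregate integral to pointwise (a.e.) optimality of individual consumers. This requires a measurable selection of the pointwise maximizer of $U(x,\ell,h) - \pi_h \ell$, together with boundedness of $U$ and $\pi_h \ge 0$ to guarantee the consumer problem is well posed, and it is precisely here that the continuum-of-consumers structure and the convexity of the integral $G$ do the real work. By contrast, the extraction of the multipliers and the supplier's linear program are routine once $G$ is known to be convex and compact, and the same $\mu_t$ simultaneously price both sides because the supplier's revenue and the consumers' payments are both the tail-energy functional weighted by $\mu$.
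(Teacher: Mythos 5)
Your proposal is correct and follows essentially the same route as the paper's proof: Lagrangian duality on the convexified problem (justified by convexity and compactness of $G$), prices $\pi_h = \sum_t \lambda_t [h+1-t]_+$, consumer optimality via pointwise maximization of the Lagrangian integrand, supplier optimality via the tail-sum identity plus complementary slackness, and market clearing by construction. The only difference is presentational: you spell out the multiplier extraction via the perturbed value function (which the paper simply asserts as ``dualizing''), and you interchange sums directly where the paper uses summation by parts.
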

\begin{proof}
Dualizing the social welfare problem with respect to \eqref{3} implies that there 
exist non-negative numbers $\lambda_1, \ldots \lambda_T$ such that the optimum $\VEC z^*(1)$ maximizes
\begin{eqnarray}
J= \max_{\VEC z(1) \in G}  z_0 (1) - \sum_{t=1}^T \lambda_t z_t(1) ,\label{6}
\end{eqnarray}
and satisfies the complementary slackness conditions,
\begin{equation}
\lambda_t\Big(z_t(1) - \sum_{i=t}^T p_i\Big) = 0, ~~~t=1,\ldots,T .\label{eq:comp_slck}
\end{equation}

Given the Lagrange multipliers $\lambda_1,\ldots,\lambda_T$, the term being maximized in \eqref{6} can be written as
\begin{equation}
\int_0^1 \Big[ U(x,\ell(x),h(x) - \ell(x)\sum_{t=1}^T\lambda_t[h(x)+1-t]_+ \Big] dx .\label{eq:neweq1}
\end{equation}
In order to maximize the value of the integral in \eqref{eq:neweq1}, 
\begin{align}
(\ell^*(x),h^*(x)) &= \arg\max_{\ell,h} U(x,\ell,h) -
\ell\sum_{t=1}^T\lambda_t[h+1-t]_+ \label{eq:neweq2}
\end{align}
Interpreting the quantity $\sum_{t=1}^T\lambda_t[h+1-t]_+$ as the per kW price $\pi_h$ for duration $h$,  \eqref{eq:neweq2} amounts to the consumer welfare maximization condition of equilibrium. 

In order to prove that $\pi_h = \sum_{t=1}^T\lambda_t[h+1-t]_+ $ are indeed equilibrium prices, we need to show that the supplier revenue maximization at these prices will result in market clearing. 

The total revenue of the supplier can be written as 
\begin{align}
\sum_{h=1}^T L_h\pi_h = \sum_{k=1}^T \delta_k(\pi_k -\pi_{k-1}) \label{eq:neweq3}
\end{align}
where $\delta_k = \sum_{t \geq k} L_t$. Further,  using the expression for $\pi_k$, \eqref{eq:neweq3} can be written as
\begin{align}
&\sum_{k=1}^T \delta_k \left(\sum_{i=1}^k \lambda_i\right)
= \sum_{i=1}^T \lambda_i \left(\sum_{k=i}^T \delta_k\right) 
\leq \sum_{i=1}^T \lambda_i \left(\sum_{k=i}^T p_k\right),\label{eq:neweq4}
\end{align}
where we used the constraints from supplier's optimization problem in \eqref{eq:neweq4}. Because $\vec z^*(1)$ satisfies the complementary slackness conditions  \eqref{eq:comp_slck}, the revenue in \eqref{eq:neweq4} is bounded from above by
\[ \sum_{i=1}^T \lambda_i z^*_i(1).\]
It is easy to verify that this upper bound on revenue is achieved by  $L_h = \int_0^1 \ell^*(x)\ind_{\{h^*(x) = h\}}$. Thus, the supplier's revenue maximization results in the correct bundle of  services needed to clear the market. 
\end{proof}
\begin{remark}
It follows from the definition of prices $\pi_h =\sum_{t=1}^T\lambda_t[h+1-t]_+ $ that $\pi_h$ is non-decreasing in $h$ with non-decreasing increments. In particular,
\[\pi_h-\pi_{h-1} = \lambda_1 + \lambda_2+\ldots + \lambda_h.\]
 This reflects the fact that it is more difficult to provide  a service of duration $h_1 + h_2$ than
two services of duration $h_1$ and $h_2$.    This may induce consumers 
 to install devices, e.g., storage, that can bridge the service interruptions caused by purchasing two services of durations $h_1$ and $h_2$ rather than a single service of duration $h_1 + h_2$.
\end{remark}

\begin{remark}
\label{remarkslot}
Define 
\begin{equation}
\mu_t=\sum_{i=1}^t \lambda_i^* . \label{32}
\end{equation}
Since the total supply in slot $t$ is $s_t = \sum_{i=t}^T p_i$, we see that $ \la \boldsymbol \ll^*, \boldsymbol s\rangle = \la \boldsymbol \mu^*, \boldsymbol p \rangle$.  $\mu_t$ can be viewed as the per kW price  of each time slot. Suppose the supply time profile is $\VEC q$ is the permutation of $\VEC p$ given by $q_t(i) = p_i$.  Then $q(t(1)) \ge \cdots \ge 
q(t(T)))$.  So $\mu_1$ is the per kW price in slot $t(1)$ where the supply is largest, $\mu_2$ is the price in slot $t(2)$ where the supply is second largest, and so on.  As expected, \eqref{32} implies $\mu_1 \le \mu_2 \cdots \leq \mu_T$.
\end{remark}

\begin{remark}
%
Suppose consumers classified some consumption as DD load  requiring $h$ slots during a  period of $T$ slots, allowing the supplier to
select the $h$ slots.  These customers are similar to very reliable demand-response subscribers.  Instead of being paid for reducing their consumption in response to a `demand response' event, they would get a discount for a DD load.
\end{remark}
\begin{remark}{ In our formulation the supply side represents an aggregation of many suppliers.  Individual suppliers may benefit by pooling their supplies to offer longer service contracts and take advantage of higher prices.  Thus we are assuming that  suppliers are able to identify and exploit all such opportunities.  We implicitly assume that even with such coordination the number of effective suppliers is large enough to justify the assumption of a competitive market. }
\end{remark}

\section{Illustrative Cases} \label{sec-illustra}


We construct the equilibrium  contracts and prices when all consumers have the same utility function $U(\ell,h)$ with $U(\ell,h) =0$ if either $\ell =0$ or $h=0$. We consider  two cases.

\subsection{$U(\ell,h)$ is strictly concave in $\ell$ for all $h>0$ and has non-decreasing, strictly positive increments in $h$ for all $\ell >0$.}\label{sec:4a}

Since consumers are identical, in equilibrium they must receive the same net benefit or consumer surplus $H$, though not necessarily the same allocation.  The equilibrium prices and allocations  $\boldsymbol \lambda^*, \ell^*,h^*$ are parameterized by $H$.  The approach is similar to that in  \cite{tanvar93}. The consumer surplus from a service contract $(\ell,h)$ with price $\pi$ is
\begin{equation}
w(\ell,h,\pi)=U(\ell,h)-\pi \ell . 
\label{welfare} 
\end{equation}
 The maximum price a consumer is willing to pay for a duration $h$ service while receiving surplus $H$ is 
\begin{align} \label{34}
&\pi(h,H) \notag \\
&= \left \{
\begin{array}{ll}
\max \{\pi \ge 0 ~|~ \mbox{ there exists $\ell > 0$ with } w(\ell, h, \pi) \ge H \} \\
\mbox{undefined if there is no $\ell > 0,\pi \ge 0$ with } w(\ell,h, \pi ) \ge H
\end{array}
\right . ,
\end{align}
while the associated power demand is
\begin{equation}
\ell(h,H) = \mathrm{arg} \max_{\ell >0} U(\ell,h) - \pi(h,H)\ell .
\label{maxl}
\end{equation}
For a fixed $h$, $\pi(h,H)$ is defined over a set of values of $H$ of the form $H_{max}(h) \geq H \geq 0$.

From \eqref{welfare}, $h \mapsto w(\ell, h, \pi)$ is increasing in $h$.  So, for a fixed $H$, $\pi (h, H)$ is defined on a set of the form
$T \ge h \ge \hm (H)$.  $ \hm (H)$ is the smallest duration at which a consumer can receive surplus $H$.  

We obtain some structural characteristics of the willingness to pay $\pi(h,H)$.

\begin{lemma} \label{L2}
(1) $\pi(h,H)$ is  strictly increasing in $h$ for $h \ge \hm (H)$ and has non-decreasing increments. (2)  for $H \leq H_{max}(h)$, $\pi(h,H)$ is strictly decreasing in $H$ and  $\ell(h,H)$ is strictly increasing in $H$.
 \end{lemma}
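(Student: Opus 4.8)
The plan is to reduce the whole lemma to a single clean representation of the willingness to pay, after which three of the four claims become immediate. I first claim that for $h \ge \hm(H)$,
\[
\pi(h,H) = \sup_{\ell > 0} \frac{U(\ell,h) - H}{\ell},
\]
and that the (unique, by strict concavity) maximizer of this ratio is exactly $\ell(h,H)$. Indeed, for fixed $\ell>0$ the inequality $w(\ell,h,\pi) = U(\ell,h) - \pi\ell \ge H$ is equivalent to $\pi \le (U(\ell,h)-H)/\ell$; taking the supremum over admissible $\ell$ recovers the displayed formula from the definition in \eqref{34}. Writing $\pi^{*} := \pi(h,H)$ and letting $\bar\ell$ attain the supremum, the inequality $U(\ell,h) - \pi^{*}\ell \le H = U(\bar\ell,h) - \pi^{*}\bar\ell$ then holds for every $\ell$, so $\bar\ell$ also maximizes $U(\cdot,h) - \pi^{*}\,\cdot\,$; strict concavity makes this maximizer unique, identifying $\bar\ell = \ell(h,H)$ and yielding the tightness relation $U(\ell(h,H),h) - \pi^{*}\ell(h,H) = H$.

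Given this representation, I would obtain most of the claims by observing that the stated monotonicity and convexity hold termwise in $\ell$ and are inherited by the pointwise supremum. For part (1): for each fixed $\ell>0$, $U(\ell,\cdot)$ is strictly increasing in $h$ (strictly positive increments), so each ratio is strictly increasing in $h$, and evaluating the ratio at $\ell=\ell(h,H)$ gives $\pi(h+1,H) \ge (U(\ell(h,H),h+1)-H)/\ell(h,H) > \pi(h,H)$. For the non-decreasing increments, note that $U(\ell,\cdot)$ has non-decreasing increments, i.e. $h \mapsto U(\ell,h)$ is convex on the integers, so for each fixed $\ell$ the ratio $h \mapsto (U(\ell,h)-H)/\ell$ is convex (a positive scaling of a convex function minus a constant). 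Since a pointwise supremum of convex functions is convex, $h \mapsto \pi(h,H)$ is convex on the integers, which is exactly $\pi(h+1,H)-\pi(h,H) \ge \pi(h,H)-\pi(h-1,H)$. For the price half of part (2): each ratio is strictly decreasing in $H$, and evaluating at $\ell=\ell(h,H_2)$ for $H_1<H_2$ gives, via tightness, $\pi(h,H_2) < (U(\ell(h,H_2),h)-H_1)/\ell(h,H_2) \le \pi(h,H_1)$, so $\pi(h,H)$ is strictly decreasing in $H$.

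The remaining statement — that $\ell(h,H)$ is strictly increasing in $H$ — is the one I expect to be the main obstacle, since, unlike the others, it is not a termwise property of the supremum. The plan is to combine the already-established fact that $\pi(h,H)$ is strictly decreasing in $H$ with the monotonicity of demand in the price. Viewing $\ell(h,H)$ as the maximizer of $U(\cdot,h) - \pi(h,H)\,\cdot\,$, a revealed-preference argument (adding the two optimality inequalities at prices $\pi_1 < \pi_2$) shows the maximizer is non-increasing in price, hence non-decreasing in $H$; strict concavity, understood as strictly decreasing marginal utility, then upgrades this to strict monotonicity. Equivalently, in the differentiable case one uses the tangency characterization: the supporting line of the concave curve $U(\cdot,h)$ through $(0,H)$ touches at $\ell(h,H)$ with $H = U(\ell(h,H),h) - \ell(h,H)\,U'(\ell(h,H),h)$, whose right-hand side has derivative $-\ell\,U''(\ell,h) > 0$ in $\ell$, so $H$ is strictly increasing in $\ell(h,H)$ and the inverse relation gives the claim. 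The only delicate point is ensuring that bare strict concavity, which in principle permits kinks, is strengthened to strictly decreasing marginal utility so that the tangency point moves strictly; this is where I would be most careful.
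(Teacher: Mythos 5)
Your proof is correct and follows essentially the same route as the paper: the representation $\pi(h,H)=\max_{\ell>0}\,(U(\ell,h)-H)/\ell$ with tightness at $\ell(h,H)$, convexity of a pointwise supremum of convex functions for the non-decreasing increments (the paper writes this out as an explicit midpoint inequality), and the tangency/first-order condition combined with strict decrease of $\pi$ in $H$ for the strict monotonicity of $\ell(h,H)$. Your extra care on strictness (evaluating at the attained maximizer for the decrease in $H$, and flagging the kink issue that the paper sidesteps by writing $\partial U/\partial\ell$) only tightens steps the paper states more casually.
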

\begin{proof}  1. Since $U(\ell,h)$ is strictly increasing in $h$, the willingness to pay is also strictly increasing in $h$.  From \eqref{welfare}, \eqref{34} and \eqref{maxl} it follows that at $\ell^* = \ell(h,H)$
\begin{equation}
U(\ell^*,h) - \pi(h,H)\ell^* = H.
\end{equation}
So,
\begin{equation}\label{eq:an1}
\pi(h,H) = \frac{U(\ell^*,h) -H}{\ell^*}.
\end{equation}
Also, for any $\ell >0$, $\pi =  \frac{U(\ell,h) -H}{\ell}$ gives a surplus of $H$. Therefore, it follows that 
\begin{equation}\label{eq:an2}
\pi(h,H)  =  \max_{\ell>0} \frac{U(\ell,h)-H}{\ell}
\end{equation}
Using \eqref{eq:an2} and the non-decreasing increment property of the utility function, we have
\begin{align}
\pi(h,H) & =  \max_{\ell} \frac{U(\ell,h)-H}{\ell} \notag \\
 &\leq  \max_{\ell} \frac{\frac{U(\ell,h+1)+U(\ell,h-1)}{2}-H}{\ell} \notag \\
 &\leq \frac{\max_{\ell} \frac{U(\ell,h+1)-H}{\ell}}{2} + \frac{\max_{\ell} \frac{U(\ell,h-1)-H}{\ell}}{2} \notag \\
 &= \frac{\pi(h+1,H) + \pi(h-1,H)}{2}
\end{align}
which implies $\pi(h,H)$ has non-decreasing increments.  

2. From \eqref{eq:an2}, it follows that $\pi(h,H)$ is strictly decreasing in $H$. From \eqref{maxl},
\[\frac{\partial  U}{ \partial \ell} (\ell (h,H), h) = \pi (h, H).\]
Since $U$ is strictly concave in $\ell$, $\frac{\partial  U}{ \partial \ell} (\ell , H)$ is srictly decreasing in $\ell$ and since $\pi (h,H)$ is strictly
decreasing in $H$, $\ell(h,H)$ is strictly increasing in $H$.
\end{proof}

We now present a scheme to construct the equilibrium.  Suppose the available power supply profile is $p_1 \ge p_2 \cdots \ge p_T$.

\textit{Step 1} \
Pick a trial equilibrium consumer surplus $H$.  Use \eqref{34}-\eqref{maxl} to   construct $(T-\hm +1)$ different contracts $(\ell (h, H), h)$ of duration $h=T, \cdots, \hm$, at price $\pi (h,H)$.

\textit{Step 2} \
Assign contract $(\ell (T, H), T)$ to a group of consumers of size $n(T) = p_T / \ell(T,H)$.  Each of these consumers will receive $\ell (T,H)$kW
for the $T$ slots $T, \cdots, 1$.

Assign contract $(\ell( (T-1),H),T-1)$ to a (disjoint) group of consumers of size $n(T-1) = [p_{T-1} -p_T]/\ell(T-1, H)$.  Each of these consumers will receive $\ell (T-1,H)$kW
for the $(T-1)$ slots $(T-1), \cdots, 1$.

Proceed in this way to assign the last contract $(\ell(\hm, H)$ to a (disjoint) group of consumers $n(\hm) = [p_{\hm -1} - p_{\hm }]
/\ell (\hm, H)$.

\textit{Step 3}  \ This allocation serves $N(H) = n(T) + \cdots + n(\hm)$ consumers.  If $N(H) = 1$, all consumers are served and this
is the equilibrium allocation\footnote{The non-decreasing increment property of $\pi(h,H)$ implies that supplier revenue is maximized at these contracts.}.  If $N(H) > 1$ ($< 1$), one must increase (decrease) $H$ and return to Step 1.  From Lemma \ref{L2},
we know that $\partial N(H) /\partial (H) $ is strictly negative, so there is a unique $H^*$ with $N(H^*) = 1$.

The above allocation is illustrated in the  left plot of Fig. \ref{fig-ex}. There are five contracts of duration $h = 5, \cdots, 1$ and
power level $\ell(h, H) $ supplying $n(h) = [p_{h}-p_{h+1}]/\ell(h, H) $ consumers.  The equilibrium surplus $H$ is determined by the market
clearing condition: $N(H)=n(5) + \cdots + n(1) = 1$.
%
\subsection{$U(\ell,h)$ is strictly convex in $\ell$ for all $h>0$ and has non-increasing increments in $h$ for all $\ell>0$.} \label{sec:4b}
We relax the social welfare problem keeping only the total energy as constraint:
\begin{align}
    &\max_{\ell(x),h(x)} \int_0^1 U(x,\ell(x),h(x))dx \notag \\
    & \mbox{subject to} \notag \\
    &\sum_{i=1}^T d_i = \sum_{i=1}^T p_i ,\notag\\
    &d_i := \int_0^1 \ell(x)\ind{\{i \leq h(x)\}}dx, \quad i=1,2,\ldots,T, \notag \\
    & \ell(x) \geq 0,\;\;h(x) \in \{1,2,\ldots,T \}.
\end{align}
Consider any $(\ell(x), h(x))$ and set $E(x) = \ell(x)h(x)$. Since $U(\ell, h)$ is strictly convex in $\ell$ and has non-increasing increments in $h$, 
\begin{equation}
U(E(x),1) \geq h(x) U(\ell(x),1) \geq U(\ell(x),h(x)).
\end{equation}
Thus the optimal allocation of the relaxed problem will have $h(x)=1$. It is straightforward to show that the solution of the relaxed problem is feasible for the original problem. Consequently, in this case, the market outcome  favors contracts of shorter duration reflecting the non-increasing increments on the utility function for $h$, so shorter contracts are more valuable. The  allocation is illustrated by considering a power supply with five levels as in the right plot of Fig. \ref{fig-ex}.

\begin{figure*}[!t]
\centering
\includegraphics[width=4.0in]{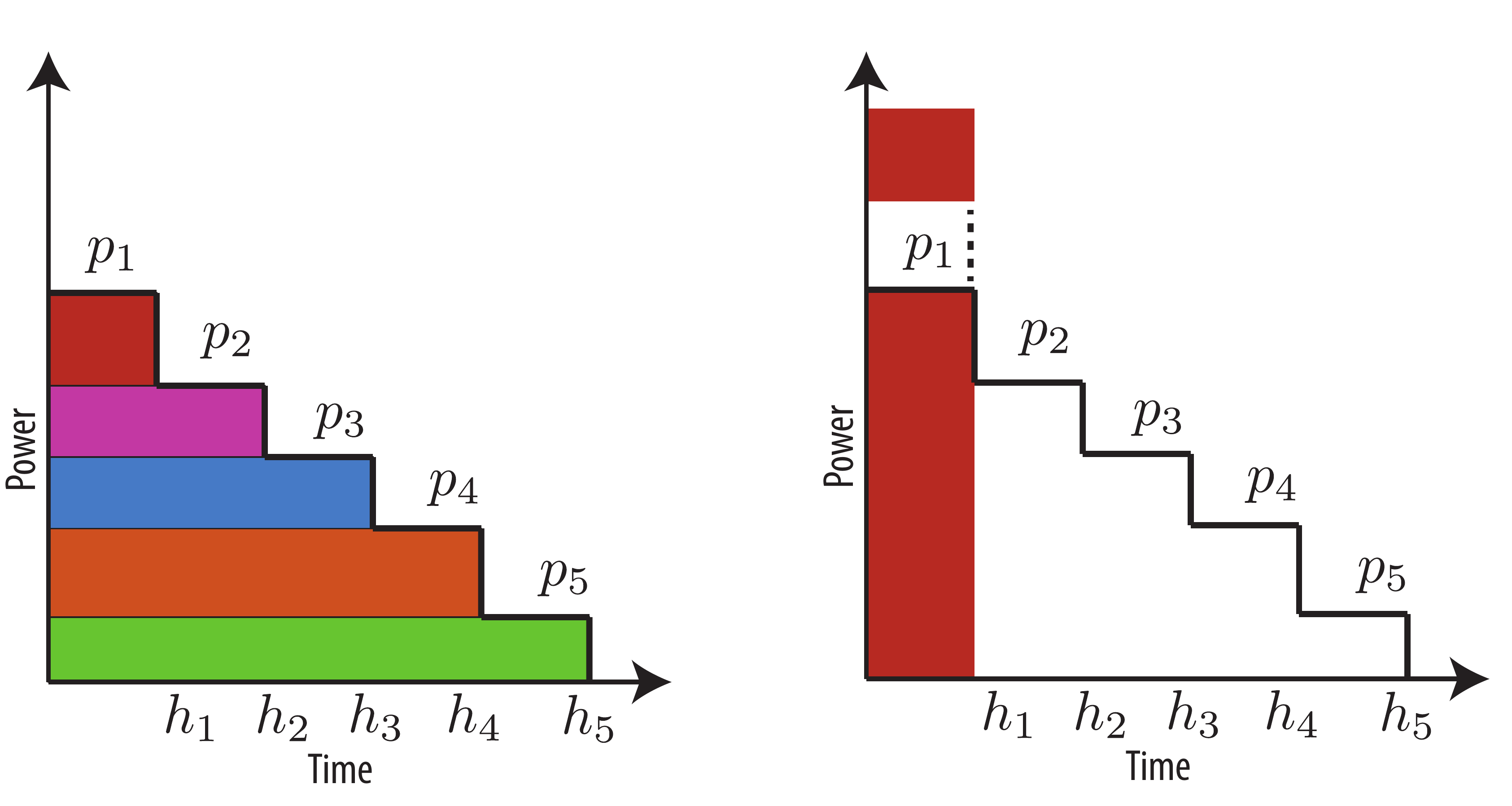}
\caption{Optimal allocations for different forms of the utility functions.}
\label{fig-ex}
\end{figure*}
\begin{remark}{ In section \ref{sec:4a}, for the same power level, the marginal price of the $h$th slot, $\pi(h) - \pi(h-1)$ is \textit{increasing}.  This is due to the fact that from a given 
supply profile it may be possible to produce  a $m$-slot and a $k$-slot service but not a $(m+k)$-slot service, as can be seen from the definition of adequacy.  This contrasts with the assumption in  \cite{Chao-Oren86} that, for conventional generator technology, the marginal price of producing electricity for duration $h$ decreases with $h$. In section \ref{sec:4b}, the  above this issue does not arise since service of a single duration is produced.}
\end{remark}

\section{Spot vs duration-differentiated market} \label{sec-compare}
Today most consumers pay a fixed price for electrical energy that is independent of when and how much they consume, so they
have neither the information nor the incentive to adjust their consumption to changes in supply.  Consequently, many advocate that consumers should face a real-time or spot price that matches the cost of supply to promote efficiency.  We compare the equilibria in a spot market (s) and a duration-differentiated market (d) in a simple example. The results illustrate possible inefficiencies in the market outcomes of a spot pricing scheme with respect to the forward market for  duration-differentiated services.


The supplier receives free renewable power given by the iid sequence $p_t, \ t=1, 2, \cdots, T$.  At the beginning of slot $t$ the supplier learns $p_t$ and offers to sell power in that slot at the spot price of $\pi(t)$ per kW.  The supplier is obligated to sell as much power as is demanded. Any shortfall is made up by purchasing `grid' power at price $C$.  Any surplus is discarded.

Consumers require only unit duration services (that is, $h=1$).  They are indifferent as to when they receive power, which they all evaluate according to the  utility
function $U(\ell)$.  Consumers have a priori expectations $\hat{\pi}(t)$ about the prices $\pi(t)$, and given their flexibility, they will only purchase
electricity in the least costly slot, i.e., at price $\hat{\pi}_s = \min \{\hat{\pi}(t)\}$.  The amount they will purchase is
\[ \hat{\ell} = \ell(\hat{\pi}_s) = \arg \{U_{\ell} (\ell) =  \hat{\pi}_s\} .\] 
The fraction of consumers $n_t \in [0,1]$ that decide to purchase in a particular slot is arbitrary, so we may model the aggregate demand function in slot $t$ as a function of the spot price $\pi (t)$ as
\begin{equation} \label{de}
q_t (\pi (t) ) = \left\{
\begin{array}{ll}
n_t \times \hat{\ell} & \mbox{ if } \pi(t) \le \hat{\pi}_s, \\
0 & \mbox{ if } \pi (t)> \hat{\pi}_s
\end{array}
\right . .
\end{equation}
This demand function has infinite price elasticity.  By contrast, the supply function $s_t(\pi)$ is inelastic: the supplier offers all power for sale, so
\begin{equation} \label{se}
s_t (\pi (t)) = \left\{
\begin{array}{ll}
\infty & \mbox{ if } \pi > C \\
p_t  & \mbox{ if } \pi \le C
\end{array}
\right . .
\end{equation}
From \eqref{de} and \eqref{se}, in equilibrium $\pi(t) = \hat{\pi}_s$ for all $t$, and the supplier must set $\hat{\pi}_s$.
Suppose it costs the supplier $c$ per kW to distribute electricity to the consumer.  The supplier's expected profit over $T$ slots is
\begin{align*}
R_s(\hat{\pi}_s) &= \expec \sum_t [(\hat{\pi}_s -c) n_t \times \hat{\ell} -C (n_t \times \hat{\ell}-p_t)^+] \notag \\
&= (\hat{\pi}_s -c) \hat{\ell} -C \expec \big [\sum_t (n_t \times \hat{\ell}-p_t)^+ \big ],
\end{align*}
since $\sum n_t =1$, so $\hat{\ell}$ is the total amount of energy that consumers  demand at price $\hat{\pi}_s$. The price $\hat{\pi}_s$ that the supplier  sets  depends upon competition in supply.  We assume monopolistic competition where suppliers set their own prices but  which drives expected profit to zero, so
$\hat{\pi}_s$ is the solution of
\begin{equation}
R_s(\hat{\pi}_s) =(\hat{\pi}_s -c)  \ell(\hat{\pi}_s)-C \expec \big[\sum_t (n_t \times \ell(\hat{\pi}_s)-p_t)^+ \big]= 0.\label{sr}
\end{equation}
In the DD market, the supply function of kW-slots is
\begin{equation} \notag
s (\pi ) = \left\{
\begin{array}{ll}
\infty & \mbox{ if } \pi > C \\
\hat{E}  & \mbox{ if } \pi \le C
\end{array}
\right . .
\end{equation}
where $\hat{E}$ is the expected value of the total power $E = \sum p_t$ that the supplier expects to receive, whereas the demand function is simply 
\[q(\pi) = \ell (\pi).\]
So if the supplier sets  price $\hat{\pi}_d$ per kW-slot, the zero expected profit condition   is 
\begin{equation}
R_d(\hat{\pi}_d) = (\hat{\pi}_d -c) {\ell} (\hat{\pi}_d) -C \expec ({\ell}(\hat{\pi}_d)-\sum_t p_t)^+ = 0. \label{dr}
\end{equation}
Comparing the second terms in \eqref{sr} and \eqref{dr} we see, from the convexity of $x \mapsto x^+$, that 
\[ \sum_t (n_t \times \ell -p_t)^+ \ge  ({\ell}-\sum_t p_t)^+,\]
with strict inequality if $(n_t \times \ell -p_t)$ assumes positive and negative values for different $t$, which will likely be the case if
$p_t$ is not constant.  In this case we conclude that
\[\hat{\pi}_s > \hat{\pi}_d \mbox{  and } \ell(\hat{\pi}_s) < \ell (\hat{\pi}_d),\]
so the spot market is leads to a lower consumer surplus than the duration-differentiated market.  The underlying reason for the welfare loss is that in the spot market consumers are less able to reveal their demand flexibility in ways that suppliers can use to match variability in supply.
\begin{remark}
Providing discounts to flexible consumers is not a novel practice. In fact, there already exist other markets (for example, online airfare and hotel reservations) where demand flexibility is used to give discounts to more flexible consumers.
\end{remark}
\section{Conclusions} \label{sec-conclusions}


The use of flexible loads to shape demand in response to supply variation can be an effective alternative to greater reserves to integrate renewable sources. In this paper, we focus on a stylized model of a continuum of flexible loads, each requiring a constant power level for a specified duration within an operational period. The flexibility resides in the fact that the power delivery may occur at any subset of the total period. 
A key characteristic of flexible loads is the fact that many supply profiles can effectively provide a set of these loads. A complete characterization of supply adequacy conditions and an algorithm for allocating supply to meet these loads in an efficient way were developed. 
A forward market for these loads was also investigated. The centralized solution that maximizes social welfare along with the characterization of an efficient competitive equilibrium were presented. These results can be extended in several ways. In particular, the consideration of additional load flexibility constraints, the impact of distribution system constraints and the practical implementation of duration-differentiated (and other flexible demand) markets need further study.

\appendix
\section{Proof of Theorem \ref{thm:continuum_allocation}} \label{sec:allocation_proof}

Suppose $\VEC p$ is simply adequate for  loads $(\ell(x),h(x)), x \in [0,1]$.
Let $\VEC q$ be the supply time profile corresponding to  $\VEC p$. Simple adequacy of $\vec p$ implies that there exists an allocation $B(x,t)$ such that 
\begin{align}
\int_0^1 \ell(x)&B(x,t)dx \leq  q_t ,\\
\sum_{t=1}^T &B(x,t) = h(x) .
\end{align}

We will group together consumers who are being served at the same time slots under the allocation $B(\cdot,\cdot)$ as follows: For each  subset $S \subset \{1,2,\ldots,T\}$, define
\begin{equation}
P_{S} = \{ x \in [0,1] | B(x,t)=1, \forall  t \in S  \mbox{and} B(x,t)=0, \forall t \notin S\}
\end{equation}
The finite collection of sets $P_S, S \subset \{1,\ldots,T\},$ forms a partition of the set of consumers.

The set of consumers being served at time $1$ under $B(\cdot,\cdot)$ is
\[ \mathcal{B}_1 = \bigcup_{S \ni 1} P_S \]

Let $\mathcal{A}_1$ be the set of consumers being served under the LLDF allocation at time $1$.

For any  set $A$, $\mu(A) := \int_{A} \ell(x)dx$ \footnote{$\mu$ can be viewed as a measure on the unit interval. Assuming that $\ell(x)>0$, $\mu$ and the Lebesgue measure are absolutely continuous with respect to each other. Therefore,  an event is $\mu$-almost sure $\iff$ it is Lebesgue almost sure.}. It is easy to verify that 
\begin{equation}
\mu(\mathcal{A}_1) = \min(q_1, \mu([0,1])),
\end{equation}
and that $\mu(\mathcal{B}_1) \leq \mu(\mathcal{A}_1)$.

1. Consider first the case where  $\mu(\mathcal{A}_1) = \mu(\mathcal{B}_1)$. 
Define $C = \mathcal{A}_1\setminus \mathcal{B}_1$ and $D = \mathcal{B}_1\setminus\mathcal{A}_1$.
We must have:
$\mu(C) = \mu(D)$.

Suppose $\mu(C)= \mu(D) >0$. Then, because $P_S$ from a partition of the set of consumers, we can write

\[ C = \bigcup_{S} C \cap P_S \]
\[ D= \bigcup_S D \cap P_S\]

We will refer to the sets $C \cap P_S$ as the ``atoms'' of $C$ and the sets $D \cap P_S$ as "atoms" of $D$.

Because $\mu(C)  = \mu(D) >0$, there must exist sets $S,U \subset \{1,\ldots,T\}$ such that $\mu(C \cap P_S) >0$ and $\mu(D \cap P_U) >0$. 
We need to consider two possibilities:

\begin{enumerate}
\item $\mu(C \cap P_S) \geq \mu(D \cap P_U)$. Find a set $C' \subset C \cap P_S$ such that $\mu(C') = \mu(D \cap P_U)$. Let $D' =D \cap P_U $.
~\\

The basic idea now is to ``swap'' $C'$ and $D'$. 
Every consumer in $C'$ has a longer leftover duration than every consumer in $D'$. Further, all consumers in $C'$ are always served together and all consumers in $D'$ are always served together.  Therefore, there must be a time slot $k$ at which the set $C'$ is being served under allocation rule $B(\cdot,\cdot)$ but $D'$ is not.  
Define the swapped allocation $B'(\cdot,\cdot)$ as being identical to $B(\cdot,\cdot)$ except that 
\[ B'(C',1) =1, ~ B'(C,k) =0,\]
 \[B'(D',k) =1, ~ B'(D',1) =0.\]
It is clear that this operation preserves adequacy.
\item $\mu(C \cap P_S) < \mu(D \cap P_U)$. In this case, let $C' = C \cap P_S$ and find a set $D' \subset D$ with $\mu(D') = \mu(C \cap P_S)$. Repeat the swap as above.

\end{enumerate}

In either of the above cases, after the swap is complete, we have either removed one "atom" of $C$ or one atom of $D$.  That is, if $\mathcal{B}'$ is the set of consumers being served at time $1$ after the swap, then either $\mathcal{A}\setminus \mathcal{B}'$ or $\mathcal{B}' \setminus \mathcal{A}$ has one less atom (of positive measure under $\mu$) than its pre-swap counterpart. Since there are only finitely many such atoms, a finite number of swaps would would transform $\mathcal{B}_1$ to a set that is almost surely identical to $\mathcal{A}_1$ (with respect to the $\mu$ measure). Thus, we have constructed an allocation that agrees ($\mu$ almost surely) with LLDF at time $1$. 

2. In case $\mu(\mathcal{B}_1) < \mu(\mathcal{A}_1)$, first find a subset $\mathcal{E}$ of $\mathcal{A}_1 \setminus \mathcal{B}_1$ with $\mu(\mathcal{E}) = \mu(\mathcal{A}_1) - \mu(\mathcal{B}_1)$. Change $\mathcal{B}_1$ to $\mathcal{B}_1 \cup \mathcal{E}$ and then use the same argument as above.

The argument for future time slots is similar.
\bibliographystyle{IEEEtran}
\bibliography{timedif}

\begin{thebibliography}{10}
\providecommand{\url}[1]{#1}
\csname url@samestyle\endcsname
\providecommand{\newblock}{\relax}
\providecommand{\bibinfo}[2]{#2}
\providecommand{\BIBentrySTDinterwordspacing}{\spaceskip=0pt\relax}
\providecommand{\BIBentryALTinterwordstretchfactor}{4}
\providecommand{\BIBentryALTinterwordspacing}{\spaceskip=\fontdimen2\font plus
\BIBentryALTinterwordstretchfactor\fontdimen3\font minus
  \fontdimen4\font\relax}
\providecommand{\BIBforeignlanguage}[2]{{%
\expandafter\ifx\csname l@#1\endcsname\relax
\typeout{** WARNING: IEEEtran.bst: No hyphenation pattern has been}%
\typeout{** loaded for the language `#1'. Using the pattern for}%
\typeout{** the default language instead.}%
\else
\language=\csname l@#1\endcsname
\fi
#2}}
\providecommand{\BIBdecl}{\relax}
\BIBdecl

\bibitem{CAISO}
CAISO, ``Integration of renewable resources: Operational requirements and
  generation fleet capability at 20\% rps,'' \emph{CAISO Report}, 2010.

\bibitem{kirschen2010}
M.~Ortega-Vazquez and D.~Kirschen, ``Assessing the impact of wind power
  generation on operating costs,'' \emph{Smart Grid, IEEE Transactions on},
  vol.~1, no.~3, pp. 295 --301, dec. 2010.

\bibitem{negwankowshamey12}
M.~Negrete-Pincetic, G.~Wang, A.~Kowli, E.~Shafieepoorfard, and S.~Meyn, ``The
  value of volatile resources in electricity markets,'' \emph{Submitted to IEEE
  Transactions on Automatic Control}, 2013.

\bibitem{cal09K}
D.~S. Callaway, ``Tapping the energy storage potential in electric loads to
  deliver load following and regulation, with application to wind energy,''
  \emph{Energy Conversion and Management}, vol.~50, no.~5, p. 1389 Ð 1400,
  2009.

\bibitem{galus2010}
M.~D. Galus, R.~La~Fauci, and G.~Andersson, ``Investigating {PHEV} wind
  balancing capabilities using heuristics and model predictive control,'' in
  \emph{Power and Energy Society General Meeting, 2010 IEEE}, 2010, pp. 1--8.

\bibitem{papaoren2010}
A.~Papavasiliou and S.~Oren, ``Supplying renewable energy to deferrable loads:
  Algorithms and economic analysis,'' in \emph{Power and Energy Society General
  Meeting, 2010 IEEE}, 2010, pp. 1--8.

\bibitem{matdyscal12K}
J.~L. Mathieu, M.~Dyson, and D.~S. Callaway, ``Using residential loads for fast
  demand response: The potential resource and revenues, the costs and the
  policy recommendations,'' in \emph{2012 ACEEE Summer Study on Energy
  Efficiency in Buildings}, 2012.

\bibitem{anand2012}
A.~Subramanian, M.~Garcia, A.~Dominguez-Garcia, D.~Callaway, K.~Poolla, and
  P.~Varaiya, ``Real-time scheduling of deferrable electric loads,'' in
  \emph{American Control Conference (ACC), 2012}, 2012, pp. 3643--3650.

\bibitem{Paschalidis}
I.~C. Paschalidis, L.~Binbin, and M.~C. Caramanis, ``Demand-side management for
  regulation service provisioning through internal pricing,'' \emph{IEEE
  Transactions on Power Systems}, vol.~27, no.~3, pp. 1531--1539, Aug 2012.

\bibitem{tanvar93}
C.-W. Tan and P.~Varaiya, ``Interruptible electric power service contracts,''
  \emph{Journal of Economic Dynamics and Control}, vol.~17, no.~3, pp. 495 --
  517, 1993.

\bibitem{pravin2011}
P.~Varaiya, F.~Wu, and J.~Bialek, ``Smart operation of smart grid:
  Risk-limiting dispatch,'' \emph{Proceedings of the IEEE}, vol.~99, no.~1, pp.
  40--57, 2011.

\bibitem{negmey12}
M.~Negrete-Pincetic and S.~Meyn, ``{Markets for Differentiated Electric Power
  Products in a Smart Grid Environment},'' in \emph{IEEE PES 12: Power Energy
  Society General Meeting}, 2012.

\bibitem{bitarlow2012}
E.~Bitar and S.~Low, ``Deadline differentiated pricing of deferrable electric
  power service,'' in \emph{Decision and Control (CDC), 2012 IEEE 51st Annual
  Conference on}, 2012, pp. 4991--4997.

\bibitem{dd2013}
A.~Nayyar, M.~Negrete-Pincetic, K.~Poolla, and P.~Varaiya, ``Rate constrtained
  energy services,'' \emph{Submitted to IEEE Transactions on Automatic
  Control}, 2014.

\bibitem{arnold}
B.~C. Arnold, \emph{Majorization and the Lorenz Order: A Brief
  Introduction}.\hskip 1em plus 0.5em minus 0.4em\relax Springer-Verlag, 1987.

\bibitem{aumann65}
R.~Aumann, ``Integrals of set-valued functions,'' \emph{Journal of Mathematical
  Analysis and Applications}, vol.~12, no.~1, pp. 1--12, 1965.

\bibitem{aumann2}
\BIBentryALTinterwordspacing
R.~J. Aumann, ``\BIBforeignlanguage{English}{Existence of competitive
  equilibria in markets with a continuum of traders},''
  \emph{\BIBforeignlanguage{English}{Econometrica}}, vol.~34, no.~1, pp. 1--17,
  1966. [Online]. Available: \url{http://www.jstor.org/stable/1909854}
\BIBentrySTDinterwordspacing

\bibitem{Chao-Oren86}
H.~Chao, S.~Oren, and a.~R.~W. S.A.~Smith, ``Multilevel demand subscription
  pricing for electric power,'' \emph{Energy Economics}, vol.~4, pp. 199--217,
  1986.

\end{thebibliography}

\end{document}